\documentclass[copyright,creativecommons]{eptcs}

\usepackage[latin1]{inputenc}
\usepackage[T1]{fontenc}

\usepackage{xspace}
\usepackage{longtable}
\usepackage{array}
\usepackage{comment}

\usepackage{amsmath}
\usepackage{amssymb}
\usepackage{amsthm}

\usepackage{paralist}

\input{define}

%% \drafton if we are in a draft, \draftoff otherwise.
\draftoff

\title{Bounded Parikh Automata}

\author{Micha\"el Cadilhac$^1$
  \and
  Alain Finkel$^2$
  \and
  Pierre McKenzie$^1$
  \switchtoinstitutes
  $^1$~~DIRO,
  Universit\'e de Montr\'eal
  \email{\texttt{\{cadilhac, mckenzie\}@iro.umontreal.ca}}
  \and
  $^2$~~LSV, ENS Cachan \& CNRS
  \email{finkel@lsv.ens-cachan.fr}}

\bibliographystyle{eptcs}

\def\revisionset! #1 !{\gdef\therevision{#1}}
\def\revision$LastChangedRevision:#1${\revisionset!#1!}
\revision$LastChangedRevision: 452 $

\hyphenation{DetPA DetCA DetAPA DetRBCM}
\begin{document}
\maketitle

\begin{abstract}
  The Parikh finite word automaton model (PA) was introduced and studied by
  \KaR~\cite{klaedtke-ruess03}.  Here, by means of related models, it is
  shown that the bounded languages recognized by PA are the same as those
  recognized by deterministic PA.  Moreover, this class of languages is the
  class of bounded languages whose set of iterations is semilinear.
\end{abstract}


\section{Introduction}

% quel est notre sujet ? caracteriser les bounded PA ?
%% caracteriser les bounded semilinear languages ? NON, il y a trop de
%% papiers sur ce sujet pour entrer maintenant sur cela.
% fournir une procedure de determinisation des CQDD et mieux les comprendre ?

% pourquoi notre determinisation des BPA ne nous fournit pas le
% complementaire ? C'est impossible pour des raisons combinatoires mais donc
% notre determinisation ne nous aide pas pour le complementaire, pourquoi ???

\paragraph{Motivation.}  Adding features to finite automata in order to
capture situations beyond regularity has been fruitful to many areas of
research.  Such features include making the state sets infinite, adding power
to the logical characterizations, having the automata operate on infinite
domains rather than finite alphabets, adding stack-like mechanisms,
etc. (See, e.g.,~\cite{fischer65,bojanczyk09,kaminski94,alur-madhusudan09}.)
Model checking and complexity theory below NC$^2$ are areas that have
benefited from an approach of this type (e.g.,~\cite{klarlund89,str94}).  In
such areas, \emph{determinism} plays a key role and is usually synonymous
with a clear understanding of the situation at hand, yet often comes at the
expense of other properties, such as expressiveness.  Thus, cases where
determinism can be achieved without sacrificing other properties are of
particular interest.

\paragraph{Context.} \KaR introduced the Parikh automaton (PA) as an
extension of the finite automaton~\cite{klaedtke-ruess03}.  A PA is a pair
$(A, C)$ where $C$ is a semilinear subset of $\bbn^d$ and $A$ is a finite
automaton over $(\Sigma \times D)$ for $\Sigma$ a finite alphabet and $D$ a
finite subset of $\bbn^d$. The PA accepts the word $w_1\cdots w_n \in
\Sigma^*$ if $A$ accepts a word $(w_1, \bar{v_1})\cdots(w_n, \bar{v_n})$ such
that $\sum \bar{v_i} \in C$.  \KaR used the PA to characterize an extension
of (existential) monadic second-order logic in which the cardinality of sets
expressed by second-order variables is available.  To use PA as symbolic
representations for model checking, the closure under the Boolean operations
is needed; unfortunately, PA are not closed under complement.  Moreover,
although they allow for great expressiveness, they are not determinizable.
%In \cite{bouajjani-habermehl98}, a subclass of PA called 1-CQDD have been
%used in which the language is bounded.

\paragraph{Bounded and semilinear languages.}  Bounded languages were
%\fixme{Pourquoi parler des semilinear language??}
defined by Ginsburg and Spanier in 1964~\cite{ginsburg-spanier64} and
intensively studied in the sixties.  Recently, they played a new role in the
theory of acceleration in regular model
checking~\cite{finkel-iyer-sutre03,bouajjani-habermehl98}.
A language $L \subseteq \Sigma^*$ is \emph{bounded} if there exist words
$w_1,w_2,\ldots,w_n \in \Sigma^*$ such that $L \subseteq w_1^*w_2^*\cdots
w_n^*$.  Bounded context-free languages received much attention thanks to
better decidability properties than context-free languages~\cite{ginsburg66}
(e.g., inclusion between two context-free languages is decidable if one of
them is bounded, while it is undecidable in the general case).
%
% A language $L$ is Parikh-bounded
% if $L$ contains a bounded language $B$ having the same Parikh image. Any
% context-free language is Parikh-bounded~\cite{blattner-latteux81}; this
% result has been proved again with a algorithm to compute the bounded language
% and it has been used for underapproximate the reachability set of
% multithreaded procedural programs and of recursive counter
% programs~\cite{ganty-majumdar-monmege10}.
%
Moreover, given a context-free language it is possible to decide whether it
is bounded~\cite{ginsburg-spanier64}.
% Recently, it was proved that the boundedness of the
% set of traces of a Petri net (and more generally of a general class of
% Well-Structured Transition Systems) is also a decidable problem.
%The class of bounded languages enjoys nice properties; in particular, it is
%closed under finite product, finite union, and subset.  A language is
%semilinear if its Parikh image is semilinear.  
Context-free languages have a 
semilinear Parikh image ~\cite{parikh66}.
%and there exist semilinear languages that are not
%context-free (e.g., $a^nb^nc^n$).  
Connecting semilinearity and boundedness,
the class $\BSL$ of bounded languages $L \subseteq w_1^*\cdots w_n^*$, for
which $\{(i_1, \ldots, i_n) \st w_1^{i_1}\cdots w_n^{i_n} \in L\}$ is a
semilinear set, has been studied intensively
(e.g.,~\cite{ginsburg-spanier64,ginsburg66,ibarra-su-dang-bultan-kemmerer02,dalessandro-varricchio08}).

\paragraph{Our contribution.}  We study PA whose language is bounded. Our main result is that bounded PA languages are also accepted by deterministic PA, and that they correspond
exactly to BSL.  Moreover, we give a precise deterministic PA form into which
every such language can be cast, thus relating our findings to another model
in the literature, CQDD~\cite{bouajjani-habermehl98}.  To the best of our
knowledge, this is the first characterization of BSL by means of a
deterministic model of one-way automata.

% We use two related models of independent interest.  First, the
% \emph{constrained automaton} (CA) is a model equivalent to PA, composed of
% an automaton and a set of vectors $C$, which accepts a word $w$ iff there
% is an accepting run on $w$ where the number of times each transition is
% taken forms a vector in $C$.  Second, the \emph{affine Parikh automaton}
% (APA) generalizes PA by allowing each transition to perform a affine
% transformation on the $d$-tuple of PA registers prior to adding a new
% vector; an APA accepts a word $w$ iff some accepting run of $A$ on $w$ maps
% $0^d$ to a vector in $C$.  Using these models, we show that any bounded
% language expressible with a PA is expressible with a \emph{deterministic}
% PA.  Moreover, as it is shown in~\cite{klaedtke-ruess03}
% and~\cite{cadilhac-finkel-mckenzie11}, (deterministic) PA are equivalent to
% (deterministic) one-way reversal-bounded counter machines \cite{ibarra78},
% thus we have that the bounded languages of one-way reversal-bounded counter
% machines can be determinized.

% We then characterize the class of BSL as the class of languages that are
% recognized by a bounded PA or equivalently by a bounded RBCM. To the best of
% our knowledge, this is the first result that charatcterizes BSL by means of
% automata.

\section{Preliminaries}\label{sec:prelim}

We write $\bbn$ for the nonnegative integers and $\bbn^+$ for $\bbn \setminus
\{0\}$.  Let $d > 0$ be an integer.  Vectors in $\bbn^d$ are noted with a bar
on top, e.g., $\bar{v}$ whose elements are $v_1, \ldots, v_d$.  We write
$\bar{e_i} \in \{0, 1\}^d$ for the vector having a $1$ only in position $i$.
We view $\bbn^d$ as the additive monoid $(\bbn^d, +)$.  For a monoid $(M,
\cdot)$ and $S \subseteq M$, we write $S^*$ for the monoid generated by $S$,
i.e., the smallest submonoid of $(M, \cdot)$ containing~$S$.  A subset $C$ of
$\bbn^d$ is \emph{linear} if there exist $\bar{c} \in \bbn^d$ and a finite $P
\subseteq \bbn^d$ such that $C = \bar{c} + P^*$.  The subset $C$ is said to
be \emph{semilinear} if it is a finite union of linear sets.  Semilinear sets
are the sets expressible by a quantifier-free first-order formula $\phi$
which uses the function symbol $+$, the congruence relations $\equiv_i$, for
$i \geq 2$, and the order relation $<$ (see, e.g.,~\cite{enderton72}).  More
precisely, a subset $C$ of $\bbn^d$ is semilinear iff there is such a formula
with $d$ free variables, with $(x_1, \ldots, x_d) \in C \Leftrightarrow \bbn
\models \phi(x_1, \ldots, x_d)$, where $\bbn$ is the standard model of
arithmetic.

Let $\Sigma = \{a_1, \ldots, a_n\}$ be an (ordered) alphabet, and write
$\eps$ for the empty word.  The \emph{Parikh image} is the morphism
$\pkh\colon \Sigma^* \to \bbn^n$ defined by $\pkh(a_i) = \bar{e_i}$, for $1
\leq i \leq n$.  A language $L \subseteq \Sigma^*$ is said to be
\emph{semilinear} if $\pkh(L) = \{\pkh(w) \st w \in L\}$ is semilinear.

A language $L \subseteq \Sigma^*$ is \emph{bounded}~\cite{ginsburg-spanier64}
if there exist $n > 0$ and a sequence of words $w_1, \ldots, w_n \in
\Sigma^+$, which we call \emph{a socle of $L$}, such that $L \subseteq
w_1^*\cdots w_n^*$.  The \emph{iteration set} of $L$ w.r.t.\ this socle is
defined as $\iter_{(w_1, \ldots, w_n)}(L) = \{(i_1, \ldots, i_n) \in \bbn^n
\st w_1^{i_1}\cdots w_n^{i_n} \in L\}$.  $\BDD$ stands for the class of
bounded languages.  We denote by $\BSL$ the class of \emph{bounded semilinear
  languages}, defined as the class of languages $L$ for which there exists a
socle $w_1, \ldots, w_n$ such that $\iter_{(w_1, \ldots, w_n)}(L)$ is
semilinear; in particular, the Parikh image of a language in $\BSL$ is
semilinear.

We then fix our notation about automata.  An automaton is a quintuple $A =
(Q, \Sigma, \delta, q_0, F)$ where $Q$ is the finite set of states, $\Sigma$
is an alphabet, $\delta \subseteq Q \times \Sigma \times Q$ is the set of
transitions, $q_0 \in Q$ is the initial state and $F \subseteq Q$ are the
final states.  For a transition $t \in \delta$, where $t = (q, a, q')$, we
define $\orig(t) = q$ and $\dest(t) = q'$.  Moreover, we define $\mu_A\colon
\delta^* \to \Sigma^*$ to be the morphism given by $\mu_A(t) = a$, and we
write $\mu$ when $A$ is clear from the context.  A \emph{path} on $A$ is a
word $\pi = t_1\cdots t_n \in \delta^*$ such that $\dest(t_i) =
\orig(t_{i+1})$ for $1 \leq i < n$; we extend $\orig$ and $\dest$ to paths,
letting $\orig(\pi) = \orig(t_1)$ and $\dest(\pi) = \dest(t_n)$.  We say that
$\mu(\pi)$ is the \emph{label} of $\pi$.  A path $\pi$ is said to be
\emph{accepting} if $\orig(\pi) = q_0$ and $\dest(\pi) \in F$; we write
$\apaths(A)$ for the language over $\delta$ of accepting paths on $A$.  We
write $L(A)$ for the language of $A$, i.e., the labels of the accepting
paths.  The automaton $A$ is said to be \emph{deterministic} if $(p, a, q),
(p, a, q') \in \delta$ implies $q = q'$.  An $\eps$-automaton is an automaton
$A = (Q, \Sigma, \delta, q_0, F)$ as above, except with $\delta \subseteq
Q\times(\Sigma \cup \{\eps\})\times Q$ so that in particular $\mu_A$ becomes
an erasing morphism.

Following~\cite{bouajjani-habermehl98, BFLS05-atva, DFGD-jancl10}, we say
that an automaton is \emph{flat} if it consists of a set of states $q_0,
\ldots, q_n$ such that $q_0$ is initial, $q_n$ is final, there is one and
only one transition between $q_i$ and $q_{i+1}$, and there may exist, at
most, a path of fresh states from $q_j$ to $q_i$ iff $i \leq j$ and no such
path exists with one of its ends between~$i$ and~$j$.  Note that no nested
loop is allowed in a flat automaton, and that the language of a flat
automaton is bounded.  Flat automata are called \emph{restricted simple} in
\cite{bouajjani-habermehl98}.

Let $\Sigma$ and $\Tau$ be two alphabets.  Let $A$ be an automaton over the
alphabet $(\Sigma \cup \{\eps\}) \times (\Tau \cup \{\eps\})$, where the
concatenation is seen as $(u_1, v_1).(u_2, v_2) = (u_1u_2, v_1v_2)$.  Then
$A$ defines the \emph{rational transduction} $\tau_A$ from languages $L$ on
$\Sigma$ to languages on $\Tau$ given by $\tau_A(L) = \{v \in \Tau^* \st
(\exists u \in L)[(u, v) \in L(A)]\}$.  Closure under rational transduction
for a class $\calC$ is the property that for any language $L \in \calC$ and
any automaton $A$, $\tau_A(L) \in \calC$.  We say that $\tau_A$ is a
\emph{deterministic} rational transduction if $A$ is deterministic with
respect to the first component of its labels, i.e., if $(p, (a, b), q)$ and
$(p, (a, b'), q')$ are transitions of $A$, then $b = b'$ and $q = q'$.


\section{Parikh automata and constrained automata}

The following notations will be used in defining Parikh finite word automata
(PA) formally.  Let $\Sigma$ be an alphabet, $d \in \bbn^+$, and $D$ a finite
subset of $\bbn^d$.  Following~\cite{klaedtke-ruess03}, let $\proj\colon
(\Sigma \times D)^* \to \Sigma^*$ and $\pext\colon (\Sigma \times D)^* \to
\bbn^d$ be two morphisms defined, for $\ell = (a, \bar{v}) \in \Sigma \times
D$, by $\proj(\ell) = a$ and $\pext(\ell) = \bar{v}$.  The function $\proj$
is called the \emph{projection on $\Sigma$} and the function $\pext$ is
called the \emph{extended Parikh image}.
% Let $\Sigma = \{a_1, \ldots,a_n\}$ and $D \subseteq \bbn^n$.
As an example, for a word $\omega \in \{(a_i, \bar{e_i}) \st 1 \leq i \leq
n\}^*$, the value of $\pext(\omega)$ is the Parikh image of $\proj(\omega)$.

\begin{definition}[Parikh automaton~\cite{klaedtke-ruess03}]
  Let $\Sigma$ be an alphabet, $d \in \bbn^+$, and $D$ a finite subset of
  $\bbn^d$.  A \emph{Parikh automaton (PA)} of dimension $d$ over $\Sigma
  \times D$ is a pair $(A, C)$ where $A$ is a finite automaton over $\Sigma
  \times D$, and $C \subseteq \bbn^d$ is a semilinear set.  The PA language
  is $L(A, C) = \{\proj(\omega) \st \omega \in L(A) \land \pext(\omega) \in
  C\}$.%
  
  The PA is said to be \emph{deterministic (DetPA)} if for every state $q$ of
  $A$ and every $a \in \Sigma$, there exists at most one pair $(q', \bar{v})$
  with $q'$ a state and $\bar{v} \in D$ such that $(q, (a, \bar{v}), q')$ is
  a transition of $A$.
The PA is said to be \emph{flat} if $A$ is flat.
We write $\LPA$ (resp. $\LDPA$) for the class of
  languages recognized by PA (resp. DetPA).
\end{definition}

In~\cite{cadilhac-finkel-mckenzie11}, PA are characterized by the following
simpler model:
\begin{definition}[Constrained automaton~\cite{cadilhac-finkel-mckenzie11}]
  A \emph{constrained automaton (CA)} over an alphabet~$\Sigma$ is a pair
  $(A, C)$ where $A$ is a finite automaton over $\Sigma$ with $d$
  transitions, and $C \subseteq \bbn^d$ is a semilinear set.  Its language is
  $L(A, C) = \{\mu(\pi) \st \pi \in \apaths(A) \land \pkh(\pi) \in C\}$.
  
  The CA is said to be \emph{deterministic (DetCA)} if $A$ is deterministic.
  An $\eps$-CA is defined as a CA except that $A$ is an $\eps$-automaton.
  Finally, the CA is said to be \emph{flat} if $A$ is flat.
\end{definition}

\begin{theorem}[\cite{cadilhac-finkel-mckenzie11}]\label{thm:altdef}%
  CA and $\eps$-CA define the same class of languages, and the following are
  equivalent for any $L \subseteq \Sigma^*$:
  \begin{compactenum}[(i)]
  \item $L \in \LPA$ (resp.\ $\in \LDPA$);
  \item $L$ is the language of an $\eps$-CA (resp. deterministic CA).
  \end{compactenum}
\end{theorem}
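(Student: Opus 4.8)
The plan is to establish the equivalence by exhibiting constructions in both directions between PA (resp.\ DetPA) and $\eps$-CA (resp.\ deterministic CA), together with the auxiliary claim that $\eps$-transitions can be eliminated from a CA without changing its expressive power.

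First I would prove the easier direction, that every PA language is the language of an $\eps$-CA, and symmetrically for the deterministic versions. Given a PA $(A, C)$ over $\Sigma \times D$ of dimension $d$, the idea is to separate the ``label'' information from the ``counting'' information: the automaton $A$ already reads pairs $(a, \bar{v})$, so I would build a CA whose underlying automaton $A'$ reads only the $\Sigma$-component, while the semilinear constraint records, via the Parikh image of the transition word $\pkh(\pi) \in \bbn^{d'}$ (where $d'$ is the number of transitions of $A$), which transitions were used and how often. Since each transition $t = (q, (a, \bar{v}), q')$ of $A$ carries a fixed vector $\bar{v} \in D$, the quantity $\pext(\omega)$ is a fixed linear image of $\pkh(\pi)$; composing $C$ with this linear map (and intersecting with the semilinear constraint forcing the counts to be consistent) yields a semilinear set $C' \subseteq \bbn^{d'}$, using that semilinear sets are closed under inverse linear maps. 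Determinism of the PA transfers directly to determinism of $A'$, since the defining condition of a DetPA is exactly determinism with respect to the $\Sigma$-component.

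For the converse, given an $\eps$-CA $(A, C)$ with $A$ over $\Sigma$ having $d$ transitions, I would attach to each transition $t_i$ the vector $\bar{e_i} \in \bbn^d$ and the letter $\mu(t_i) \in \Sigma \cup \{\eps\}$, producing a PA over $(\Sigma \cup \{\eps\}) \times \{\bar{e_i}\}$ whose extended Parikh image on an accepting path is precisely $\pkh(\pi)$, so the constraint $C$ may be reused verbatim. The only subtlety is the $\eps$-labels, which motivates the separate equivalence ``CA and $\eps$-CA define the same class''; I would handle this by a standard $\eps$-elimination on $A$, computing $\eps$-closures, but with the care that collapsing an $\eps$-path into a single transition must preserve the counting information. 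This is where the main obstacle lies: a naive $\eps$-removal merges several transitions into one and thus loses the individual counts that $C$ constrains, so a single replacement transition cannot in general carry the full vector of erased counts. The fix is to observe that the counts contributed along an erased $\eps$-segment are bounded or can be absorbed into an enlarged dimension and a correspondingly transformed semilinear constraint, again invoking closure of semilinear sets under projection and linear preimage; preserving \emph{determinism} through this elimination is the most delicate point, since $\eps$-transitions are a genuine source of nondeterminism and one must argue that the deterministic-CA class is robust under the construction.

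Finally I would assemble these implications into the stated cycle: $L \in \LPA$ iff $L$ is an $\eps$-CA language (via the two constructions above), and $\eps$-CA equals CA by the elimination lemma, giving $(i) \Leftrightarrow (ii)$ in both the nondeterministic and deterministic cases. I expect the bulk of the genuine work to reside in the $\eps$-elimination step and, especially, in checking that it respects determinism, whereas the linear-algebraic manipulations of the semilinear constraints are routine given the closure properties recalled in the preliminaries.
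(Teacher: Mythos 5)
You should note at the outset that this paper never proves Theorem~\ref{thm:altdef}: it is imported wholesale from~\cite{cadilhac-finkel-mckenzie11}, so your attempt has to be judged against what a correct proof requires rather than against a proof in the text. Your two label-translations are sound, and they are indeed the routine part: from a PA $(A,C)$ of dimension $d$ you get an $\eps$-free CA by projecting each label $(a,\bar v)$ to $a$ and replacing $C$ by its preimage under the linear map $\bbn^{|\delta|}\to\bbn^{d}$ sending the transition counts $(\#t)_{t\in\delta}$ to $\sum_{t}\#t\cdot\bar{v_t}$ (semilinear sets are closed under linear preimages); conversely, from a CA you get a PA by relabelling the $i$-th transition with $(\mu(t_i),\bar{e_i})$ and keeping $C$ verbatim. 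Both constructions preserve determinism, and since a PA never carries $\eps$-labels, they already settle the entire deterministic clause ($\LDPA$ equals the class of DetCA languages) and the inclusion of $\LPA$ in the class of $\eps$-CA languages. In particular, your worry that ``preserving determinism through the $\eps$-elimination is the most delicate point'' rests on a misreading of the statement: the theorem pairs $\LDPA$ with plain ($\eps$-free) deterministic CA, so determinism never has to survive any elimination.

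The genuine gap is the step you correctly isolate but do not actually prove: that every $\eps$-CA language is in $\LPA$ (equivalently, that $\eps$-CA collapse to CA). Your first suggested fix is false: the counts contributed by an erased $\eps$-segment are not bounded, and they matter. Take states $q_0$ (initial) and $q_1$ (final), transitions $t_1=(q_0,a,q_0)$, $t_2=(q_0,\eps,q_1)$, $t_3=(q_1,\eps,q_1)$, and $C$ given by $\#t_1=2\#t_3$; accepting paths are $t_1^nt_2t_3^k$, the language is $\{a^n : n \text{ even}\}$, and the erased segments have unbounded length and interact with $C$ in an essential way. Your fallback (``absorbed into an enlarged dimension \ldots\ closure under projection and linear preimage'') names closure properties but gives no construction. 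A correct argument needs at least: (a) Parikh's theorem, so that for each pair of states $(p,q)$ the set $R_{p,q}\subseteq\bbn^{|\delta|}$ of Parikh images of $\eps$-paths from $p$ to $q$ is semilinear; (b) a collapsed automaton whose transitions record which non-$\eps$ transition is simulated and from which entry state, with extra bookkeeping for the trailing $\eps$-segment into a final state and for the empty word; and (c) crucially, since a collapsed transition $\tau$ may be used $\#\tau$ times, each occurrence consuming a possibly different erased $\eps$-path, the new constraint must assert the existence of a vector in the $\#\tau$-fold sumset of the corresponding $R_{p,q}$, which forces you to prove a parameterized-iteration lemma: for semilinear $R$, the set $\{(n,\bar v) \st \bar v \in R+\cdots+R \ (n \text{ times})\}$ is semilinear (straightforward for a single linear set $\bar c + P^*$, then assembled by Presburger definability). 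That lemma, together with (a) and (b), is the real content of the $\eps$-elimination; without it the central claim of the theorem remains unproved in your write-up.
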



\section{Bounded Parikh automata}

Let $\LBPA$ be the set $\LPA \cap \BDD$ of bounded PA languages, and
similarly let $\LBDPA$ be $\LDPA \cap \BDD$.

Theorem~\ref{thm:lpba} below characterizes $\LBPA$ as the class $\BSL$ of
bounded semilinear languages.  In one direction of the proof, given $L \in
\BSL$, an $\eps$-CA $(A,C)$ for $L$ is constructed.  We describe this simple
construction here and, for future reference, we will call $(A, C)$ the
\emph{canonical $\eps$-CA} of $L$.

Let $L \in \BSL$, there exists a socle $w_1, \ldots, w_n$ of $L$ (minimum
under some ordering for uniqueness) such that $E = \iter_{(w_1, \ldots,
  w_n)}(L)$ is a semilinear set.  Informally, the automaton $A$ will consist
of $n$ disjoint cycles labeled $w_1, \ldots, w_n$ and traversed at their
origins by a single $\eps$-labeled path leading to a unique final state.
Then $C$ will be defined to monitor $(\#t_1,\ldots,\#t_n)$ in accordance with
$E$, where $t_i$ is the first transition of the cycle for $w_i$ and $\#t_i$
is the number of occurrences of $t_i$ in a run of $A$.  Formally, let $k_j =
\sum_{1\leq i\leq j} |w_i|$, $0\leq j\leq n$, and set $Q = \{1, \ldots,
k_n\}$. Then $A$ is the $\eps$-automaton $(Q, \Sigma, \delta, q_0, F)$ where
$q_0 = 1$, $F = \{k_{n-1}+1\}$ and for any $1 \leq i< n$, there is a
transition $(k_{i-1}+1,\eps,k_i+1)$ and for any $1 \leq i\leq n$ a cycle
$t_i\rho_i$ labeled $w_i$ on state $k_{i-1}+1$, where $t_i$ is a transition
and $\rho_i$ a path.  Then $C \subseteq \bbn^{|\delta|}$ is the semilinear
set defined by $(\#t_1,\ldots,\#t_2,\ldots,\ldots,\#t_n,\ldots) \in C$ iff
$(\#t_1, \#t_2,\ldots,\#t_n) \in E$.

\begin{theorem}\label{thm:lpba}
  $\LBPA = \BSL$.
\end{theorem}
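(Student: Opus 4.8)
The plan is to prove the two inclusions $\BSL \subseteq \LBPA$ and $\LBPA \subseteq \BSL$ separately.
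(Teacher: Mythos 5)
Your proposal contains only the trivial outline---that a set equality is proved by two inclusions---and no mathematical content for either one, so as it stands there is a genuine gap on both sides. For $\LBPA \subseteq \BSL$ you need an argument that converts a bounded PA language into a semilinear iteration set. The paper does this by fixing a socle $w_1, \ldots, w_n$ of $L$, introducing a fresh alphabet $T = \{a_1, \ldots, a_n\}$ and the morphism $h(a_i) = w_i$, and observing that $L' = h^{-1}(L) \cap (a_1^*\cdots a_n^*)$ is again in $\LPA$ by the known closure of $\LPA$ under inverse morphisms and intersection; since $\pkh(L') = \iter_{(w_1,\ldots,w_n)}(L)$ and every $\LPA$ language has a semilinear Parikh image, the iteration set is semilinear. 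Without some such device (the pull-back along $h$ is the key idea), boundedness of $L$ alone gives you no handle on the iteration set.

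For $\BSL \subseteq \LBPA$ you need an explicit construction of a PA (or an equivalent model) from a semilinear iteration set. The paper builds a ``canonical $\eps$-CA'': $n$ disjoint cycles labeled $w_1, \ldots, w_n$, linked at their origins by $\eps$-transitions into a single chain, with the semilinear constraint $C$ monitoring the number of times the first transition of each cycle is taken, so that the accepted words are exactly $w_1^{i_1}\cdots w_n^{i_n}$ with $(i_1,\ldots,i_n)$ in the iteration set; it then invokes the equivalence of $\eps$-CA and PA (Theorem~\ref{thm:altdef}). Your proposal names neither this construction nor any substitute for it, so neither inclusion is actually established.
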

\begin{proof}
  ($\LBPA \subseteq \BSL$): Let $L \subseteq \Sigma^*$ be a bounded language
  of $\LPA$, and $w_1, \ldots, w_n$ be a socle of $L$.  Define $E =
  \iter_{(w_1, \ldots, w_n)}(L)$.  Let $T = \{a_1, \ldots, a_n\}$ be a fresh
  alphabet, and let $h\colon T^* \to \Sigma^*$ be the morphism defined by
  $h(a_i) = w_i$.  Then the language $L' = h^{-1}(L) \cap (a_1^*\cdots
  a_n^*)$ is in $\LPA$ by closure of $\LPA$ under inverse morphisms and
  intersection~\cite{klaedtke-ruess03}.  But $\pkh(L') = E$, and as any
  language of $\LPA$ is semilinear~\cite{klaedtke-ruess03}, $E$ is
  semilinear.  Thus the iteration set $E$ of the bounded language $L$ with
  respect to its socle $w_1, \ldots, w_n$ is semilinear, and this is the
  meaning of $L$ belonging to $\BSL$.

  ($\BSL\subseteq \LBPA$): Let $L \in \BSL$.  Of course $L\in\BDD$.  We leave
  out the simple proof that $L$ equals the language of its ``canonical
  $\eps$-CA'' constructed prior to Theorem~\ref{thm:lpba}.  Since $\eps$-CA
  and PA capture the same languages by Theorem~\ref{thm:altdef}, $L\in\LPA$.
\end{proof}

Theorem~\ref{thm:lpba} and the known closure properties of $\BDD$ and $\LPA$
imply:
\begin{proposition}
  $\BSL$ is closed under union, intersection, concatenation.
\end{proposition}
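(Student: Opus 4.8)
The plan is to deduce the proposition directly from Theorem~\ref{thm:lpba}, which identifies $\BSL$ with $\LBPA = \LPA \cap \BDD$, by combining known closure properties of the two factors $\LPA$ and $\BDD$. The strategy is: for each of the three operations (union, intersection, concatenation), show that $\LPA$ is closed under that operation, show that $\BDD$ is closed under that operation, and conclude that the intersection class $\LPA \cap \BDD$ is closed as well; rewriting via Theorem~\ref{thm:lpba} then gives closure of $\BSL$.

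First I would record the closure properties of $\LPA$. Closure under union and intersection for PA is established in \cite{klaedtke-ruess03} and was already invoked in the proof of Theorem~\ref{thm:lpba}; closure under concatenation follows from the same source or is routine via a product-type construction on the constrained-automaton model of Theorem~\ref{thm:altdef} (concatenate the two automata, shift the semilinear constraint of the second into fresh coordinates, and take the product semilinear set). Next I would handle $\BDD$: if $L_1 \subseteq u_1^*\cdots u_m^*$ and $L_2 \subseteq w_1^*\cdots w_n^*$, then $L_1 \cup L_2$ and $L_1 \cap L_2$ both lie in $u_1^*\cdots u_m^* w_1^*\cdots w_n^*$, hence are bounded, and $L_1 L_2 \subseteq u_1^*\cdots u_m^* w_1^*\cdots w_n^*$ is bounded as well. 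So $\BDD$ is trivially closed under all three operations.

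The final step is purely formal: let $L_1, L_2 \in \BSL$. By Theorem~\ref{thm:lpba}, $L_1, L_2 \in \LPA \cap \BDD$. For each operation $\circ \in \{\cup, \cap, \cdot\}$, we have $L_1 \circ L_2 \in \LPA$ by the closure of $\LPA$ and $L_1 \circ L_2 \in \BDD$ by the closure of $\BDD$, hence $L_1 \circ L_2 \in \LPA \cap \BDD = \LBPA = \BSL$, again by Theorem~\ref{thm:lpba}.

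I expect the only genuine obstacle to be supplying the closure of $\LPA$ under concatenation, since union and intersection are quoted from \cite{klaedtke-ruess03} and the $\BDD$ side is immediate. If concatenation-closure for $\LPA$ is not directly available in the cited literature, the cleanest route is to argue it on the $\eps$-CA side afforded by Theorem~\ref{thm:altdef}: place the two $\eps$-automata in series (an $\eps$-transition joining the final states of the first to the initial state of the second), index their transitions disjointly, and let the combined constraint be the product $C_1 \times C_2$ of the two semilinear sets, which is again semilinear; an accepting path then decomposes uniquely into a prefix accepted by the first component and a suffix accepted by the second, giving exactly $L_1 L_2$.
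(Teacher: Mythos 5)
Your proposal is correct and follows exactly the paper's route: the paper derives the proposition in one line from Theorem~\ref{thm:lpba} together with the known closure properties of $\LPA$ and $\BDD$, which is precisely your argument, with the details (the trivial $\BDD$ closures and the concatenation construction for $\LPA$) spelled out. Nothing in your write-up deviates from or adds a gap to the paper's reasoning.
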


\section{Bounded Parikh automata are determinizable}

Parikh automata cannot be made deterministic in general.  Indeed,
\KaR~\cite{klaedtke-ruess03} have shown that $\LDPA$ is closed under
complement while $\LPA$ is not, so that $\LDPA \subsetneq \LPA$,
and~\cite{cadilhac-finkel-mckenzie11} further exhibits languages witnessing
the separation.  In this section, we show that PA can be determinized when
their language is bounded:
\begin{theorem}\label{thm:main}
  Let $L \in \LBPA$.  Then $L$ is the union of the languages of flat DetCA
  (i.e., of DetCA with flat underlying automata as defined in
  Section~\ref{sec:prelim}). 
  In particular, as $\LBDPA$ is closed under union, $\LBPA = \LBDPA$.
\end{theorem}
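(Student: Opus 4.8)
The plan is to use Theorem~\ref{thm:lpba} to replace the hypothesis $L \in \LBPA$ by $L \in \BSL$, and then to realise $L$ concretely as a finite union of flat DetCA languages; the two ``in particular'' consequences then follow cheaply. First I would fix a socle $w_1, \ldots, w_n$ of $L$ with $E = \iter_{(w_1,\ldots,w_n)}(L)$ semilinear. Since a word $w_1^{i_1}\cdots w_n^{i_n}$ only involves those $w_k$ with $i_k > 0$, I would split according to the support of the exponent vector: for each $S = \{s_1 < \cdots < s_t\} \subseteq \{1,\ldots,n\}$ set $E_S = E \cap \{\bar i : i_k \geq 1 \text{ for } k \in S,\ i_k = 0 \text{ for } k \notin S\}$, which is semilinear, and $L_S = \{w_{s_1}^{i_{s_1}}\cdots w_{s_t}^{i_{s_t}} : \bar i \in E_S\}$. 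Then $L = \bigcup_S L_S$, a union of at most $2^n$ languages, so it suffices to exhibit a flat DetCA for each $L_S$.

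For the skeleton of $L_S$ I would build a flat automaton whose spine reads $w_{s_1}w_{s_2}\cdots w_{s_t}$ exactly once, attaching to the block that reads $w_{s_l}$ a single backward loop relabelled by $w_{s_l}$, so that traversing the spine while taking the $l$-th loop $m_l$ times produces $w_{s_1}^{m_1+1}\cdots w_{s_t}^{m_t+1}$. Because a CA may carry an arbitrary semilinear constraint, I would let $C$ constrain the tuple of loop counts so that the resulting exponent vector lies in $E_S$; since $E_S$ is semilinear this $C$ is semilinear, and the (a priori nondeterministic) flat CA so obtained recognises exactly $L_S$. This already yields the flat shape, and the bound $i_k \geq 1$ is exactly what keeps the spine fully traversed.

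The hard part will be determinism. The automaton just described is in general not deterministic at the block boundaries: after reading a copy of $w_{s_l}$ it must decide, from the input alone, whether to repeat $w_{s_l}$ or to move on to $w_{s_{l+1}}$, and when the $w_{s_l}$ overlap this choice is ambiguous, so distinct runs may assign distinct exponent vectors to the same word. My approach would be to pin down a canonical factorisation of each word of $w_{s_1}^*\cdots w_{s_t}^*$ --- say the leftmost-greedy one --- and to show (i) that the map sending a word to its canonical exponent vector is computed by a \emph{deterministic} flat automaton of the above shape, and (ii) that the set $C'$ of canonical exponent vectors of the words of $L_S$ is again semilinear. Granting (i) and (ii), the flat DetCA $(A_S, C')$ accepts a word $v$ iff the unique canonical run yields a vector in $C'$, that is, iff $v \in L_S$. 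Step (ii) is the technical crux: writing $\phi_S\colon \bar i \mapsto w_{s_1}^{i_{s_1}}\cdots w_{s_t}^{i_{s_t}}$, the set $C'$ is the projection of $(E_S \times \mathrm{Canon}) \cap \ker\phi_S$, so it suffices that the ``same-word'' kernel $\ker\phi_S$ and the set $\mathrm{Canon}$ of canonical vectors be semilinear; I would obtain this by reducing to a socle of primitive words and arguing that equality inside the bounded region, and maximality of the greedy parse, are governed by Fine--Wilf-type periodicity, hence expressible by finitely many threshold-and-congruence conditions.

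Finally, each $(A_S, C')$ is a flat DetCA, hence a DetCA, so $L_S \in \LDPA$ by Theorem~\ref{thm:altdef}; and a flat automaton has bounded language, so $L_S \in \LBDPA$. As $\LBDPA$ is closed under union, the finite union $L = \bigcup_S L_S$ lies in $\LBDPA$, giving $\LBPA \subseteq \LBDPA$; the reverse inclusion is immediate from $\LDPA \subseteq \LPA$, whence $\LBPA = \LBDPA$.
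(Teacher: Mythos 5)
Your overall route (reduce to $\BSL$ via Theorem~\ref{thm:lpba}, split by the support of the exponent vector, then exhibit one flat DetCA per support class) is a legitimate alternative skeleton, and your final bookkeeping paragraph is fine. But your step (i) has a genuine gap, and it sits exactly at the crux of the theorem. A flat DetCA is deterministic \emph{letter by letter}: at the state ending the block for $w_{s_l}$, the backward loop (labelled $w_{s_l}$) and the spine transition (starting $w_{s_{l+1}}$) must begin with different letters. When the socle words overlap, this is impossible for the shape you prescribe. Concretely, take $w_1 = ab$, $w_2 = a$ and $L = \{(ab)^i a^j \st i \equiv j \pmod 2\}$, which is in $\BSL$, and whose socles necessarily exhibit this overlap. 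In your automaton for $S=\{1,2\}$, the state after the first block has two outgoing transitions both reading $a$ (loop on $ab$ versus spine into the $a$-block), so the automaton is not deterministic; and the greedy parse genuinely needs lookahead (is the next letter a $b$?). Folding that lookahead into a flat deterministic automaton is possible here, but it changes the loop structure (the loops become conjugates/shifted copies of the $w_{s_l}$, block boundaries move, words can end ``mid-loop''), after which the correspondence between loop counts and exponent vectors has to be re-established before a semilinear constraint can be imposed. Asserting (i) does not discharge any of this; together with your step (ii) (semilinearity of $\ker\phi_S$ and of the set of canonical vectors, which you only gesture at via Fine--Wilf periodicity), it is where essentially all the work of the theorem lies.

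For comparison, the paper never parses the input against the socle words directly, precisely to avoid this boundary nondeterminism. It observes that the canonical $\eps$-CA of a $\BSL$ language is \emph{constraint-deterministic} (accepting paths with equal labels agree on membership of their Parikh images in $C$), simulates any such $\eps$-CA by a \emph{deterministic affine} PA with finite matrix monoid via the subset construction (Lemma~\ref{lem:epsca}) --- the subset construction is what replaces your lookahead --- and then decomposes the accepting paths of that already-deterministic automaton as an SLRE whose branches satisfy the condition that the first transition of $\pi_i$ differs from that of $\rho_i$ (Lemma~\ref{lem:apathbounded}); that condition is exactly what makes the resulting flat automata deterministic, and the periodicity of the finite matrix monoid (Lemma~\ref{lem:dapafmbd=dpabd}) converts the affine bookkeeping back into semilinear constraints on loop counts. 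Your proposal would need working analogues of both ingredients --- a determinization mechanism and a loops-to-constraints correspondence --- before it could be called a proof.
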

\begin{proof}
  Let $L \in \LBPA$.  By Theorem~\ref{thm:lpba}, $L\in\BSL$.  Recall the
  canonical $\eps$-CA of a $\BSL$ language constructed prior to
  Theorem~\ref{thm:lpba}.  Inspection of the canonical $\eps$-CA $(A,C)$ of
  $L$ reveals a crucial property which we will call
  ``constraint-determinism,'' namely, the property that no two paths $\pi_1$
  and $\pi_2$ in $\apaths(A)$ for which $\mu_A(\pi_1)=\mu_A(\pi_2)$ can be
  distinguished by $C$ (i.e., the property that for any two paths  $\pi_1$
  and $\pi_2$ in $\apaths(A)$ for which $\mu_A(\pi_1)=\mu_A(\pi_2)$,
  $\Phi(\pi_1)\in C$ iff $\Phi(\pi_2)\in C$).  To
  see this property, note that if two accepting paths $\pi_1$ and $\pi_2$ in
  the $\eps$-CA have the same label $w$, then $\pi_1$ and $\pi_2$ describe
  two ways to iterate the words in the socle of $L$.  As the semilinear set
  $C$ describes \emph{all} possible ways to iterate these words to get a
  specific label, $\pkh(\pi_1) \in C$ iff $\pkh(\pi_2) \in C$.

  We will complete the proof in two steps.  First, we will show
  (Lemma~\ref{lem:epsca}) that \emph{any} $\eps$-CA having the
  constraint-determinism property can be simulated by a suitable
  deterministic extension of the PA model, to be defined next.  Second
  (Lemma~\ref{lem:dapafmbd=dpabd}), we will show that any bounded language
  accepted by such a suitable deterministic extension of the PA model
  is a finite union of languages of flat DetCA, thus concluding the proof.
\end{proof}

Our PA extension leading to Lemma~\ref{lem:epsca} hinges on associating 
affine functions (rather than vectors) to PA transitions.
In the following, we consider the vectors in $\bbn^d$ to be \emph{column}
vectors.  Let $d > 0$.  A function $f\colon \bbn^d \to \bbn^d$ is a (total
and positive) \emph{affine function} of dimension~$d$ if there exist a matrix
$M \in \bbn^{d\times d}$ and $\bar{v} \in \bbn^d$ such that for any $\bar{x}
\in \bbn^d$, $f(\bar{x}) = M.\bar{x} + \bar{v}$.  We note $f = (M, \bar{v})$
and write $\func_d$ for the set of such functions; we view $\func_d$ as the
monoid $(\func_d, \diamond)$ with $(f \diamond g) (\bar{x}) = g(f(\bar{x}))$.

Intuitively, an affine Parikh automaton will be defined as a finite automaton
that also operates on a tuple of counters.
Each transition of the automaton will blindly apply an affine transformation
to the tuple of counters.
A word $w$ will be deemed accepted by the affine Parikh automaton iff some
run of the finite automaton accepting $w$ also has the cumulative effect of
transforming the tuple of counters, initially $\bar{0}$, to a tuple
belonging to a prescribed semilinear set.

\begin{definition}[Affine Parikh automaton, introduced
  in~\cite{cadilhac-finkel-mckenzie11}]
  An \emph{affine Parikh automaton (APA)} of dimension $d$ is a triple $(A,
  U, C)$ where $A$ is an automaton with transition set $\delta$, $U$ is a
  morphism from $\delta^*$ to $\func_d$ and $C \subseteq \bbn^d$ is a
  semilinear set; recall that $U$ need only be defined on $\delta$.  The
  language of the APA is $L(A, U, C) = \{\mu(\pi) \st \pi \in \apaths(A)
  \land (U(\pi))(\bar{0}) \in C\}$.

  The APA is said to be \emph{deterministic (DetAPA)} if $A$ is.  We write
  \emph{$\LAPA$} (resp. $\LDAPA$) for the class of languages recognized by
  APA (resp. DetAPA).

  The \emph{monoid of the APA}, written $\mon(U)$, is the multiplicative
  monoid of matrices generated by $\{M \st (\exists t)(\exists \bar{v})[U(t)
  = (M, \bar{v})]\}$.
\end{definition}

\begin{lemma}\label{lem:epsca}%
  An $\eps$-CA $(A, C)$ having the constraint-determinism property (see
  Theorem~\ref{thm:main}) can be simulated (meaning that the language of
  the $\eps$-CA is preserved) by a DetAPA $(A', U, E)$ whose
  monoid is finite and such that $L(A) = L(A')$.
\end{lemma}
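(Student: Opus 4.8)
The plan is to take the given $\eps$-CA $(A,C)$ and mechanically convert it into a DetAPA. The essential idea is that determinism is available on the \emph{label alphabet} $\Sigma$ as soon as we no longer need the nondeterministic choices of $A$ to resolve acceptance, and constraint-determinism is precisely the guarantee that these choices are irrelevant to membership in $C$. So first I would build $A'$ by applying the standard subset construction to the $\eps$-automaton $A$, obtaining a deterministic automaton over $\Sigma$ with $L(A')=L(A)$; this handles both the removal of $\eps$-transitions and determinization of the label component at once. The work then lies in equipping $A'$ with an affine morphism $U$ and a target set $E$ that faithfully recompute the constraint $C$ along the unique accepting run.

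The key step is the design of $U$. For each original transition $t$ of $A$ whose occurrences $C$ cares about (in the canonical construction, the first transitions $t_1,\ldots,t_n$ of the $n$ cycles), I want a counter that tracks $\#t$. In a subset-construction transition of $A'$ on letter $a$, I would let the affine function record, additively, which of the tracked transitions $t_i$ are being fired by the letter $a$ from the current subset of states. Concretely, each $U(t')$ is of the form $(M,\bar v)$ where $M$ is a fixed $\{0,1\}$-matrix and $\bar v\in\bbn^d$ adds one to the coordinates of the relevant $t_i$; in fact for pure counting one may take $M$ to be the identity, so that $(U(\pi))(\bar 0)=\pkh(\pi)$ restricted to the tracked coordinates. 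I would then set $E$ to be the image of $C$ under the projection onto the tracked coordinates. This immediately forces $\mon(U)$ to be finite — it is generated by a single matrix (the identity, or a handful of $\{0,1\}$-matrices), so it is a finite monoid of matrices, as required.

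The main obstacle, and the step I would treat most carefully, is verifying correctness in the presence of determinization: on input $w$, the unique accepting path $\pi'$ of $A'$ corresponds to a \emph{set} of accepting paths $\pi$ of $A$ sharing label $w$, and I must argue that the single value $(U(\pi'))(\bar 0)$ correctly decides membership in $C$ for all of them simultaneously. This is exactly where constraint-determinism is indispensable: since all accepting $A$-paths labeled $w$ agree on whether $\pkh(\pi)\in C$, it suffices to pick any one of them, and I would arrange $U$ so that the deterministic run computes the tracked-coordinate profile of one representative $A$-path (e.g.\ the lexicographically least, or any canonical choice dictated by the subset construction's bookkeeping). The technical care is in showing that the affine updates along $A'$ can consistently reconstruct the counts of the tracked transitions of a \emph{single} coherent $A$-path despite the subset states collapsing several $A$-states together; once that bookkeeping is set up, membership $(U(\pi'))(\bar 0)\in E$ iff $\pkh(\pi)\in C$ for that representative, and by constraint-determinism iff $w\in L(A,C)$.

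Finally I would check the formalities: $A'$ is deterministic by the subset construction, $U$ is a well-defined morphism since it is specified on each transition of $A'$, the monoid $\mon(U)$ is finite as noted, and $E$ is semilinear as the projection of the semilinear $C$. Combining these with the run-level correspondence gives $L(A',U,E)=L(A,C)$ and $L(A')=L(A)$, completing the simulation.
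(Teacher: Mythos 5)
Your overall frame matches the paper's: apply the subset construction to $A$, and use constraint-determinism to argue that testing the Parikh image of \emph{one} representative accepting $A$-path against $C$ suffices. But the core of your construction --- the definition of $U$ --- contains a genuine gap, and it is exactly the step you defer as ``technical care.'' You propose that each transition $\und t=(\und q,a,\und r)$ of the determinized automaton ``record, additively, which of the tracked transitions are being fired by the letter $a$ from the current subset,'' with $M$ the identity so that $(U(\und\pi))(\bar 0)=\pkh(\pi)$ for a representative $\pi$. This is not well-defined: different states $q\in\und q$ fire \emph{different} transitions of $A$ on the letter $a$, so ``which tracked transitions are fired'' depends on which thread (which $A$-state) the representative path is actually in, not on the $A'$-transition alone. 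Worse, the choice of representative cannot be made online: whether a thread alive after reading $w$ extends to an \emph{accepting} path of $A$ depends on the entire remaining input, so the per-step increments of the eventual representative are not a function of the current $A'$-transition. With $M$ the identity, $(U(\und\pi))(\bar 0)$ is a fixed linear combination of the $A'$-transition counts --- i.e., your DetAPA degenerates into a DetPA --- and such a value need not be the Parikh image of \emph{any} coherent accepting path of $A$. (Indeed, if this worked, Lemma~\ref{lem:epsca} would produce a DetPA directly and the entire APA detour, including Lemma~\ref{lem:dapafmbd=dpabd}, would be unnecessary; the separation $\LDPA\subsetneq\LPA$ should make you suspicious of such a shortcut.)

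The missing idea, which is the actual content of the paper's proof, is to track \emph{all} threads simultaneously rather than commit to one: the DetAPA has dimension $|Q|\cdot|\delta|+1$, with one block of $|\delta|$ counters per state $q$ of $A$, maintaining the invariant that after reading $w$, the $q$-th block holds the Parikh image (over transitions of $A$) of \emph{some} initial $w$-labeled path of $A$ ending in $q$, for every $q$ in the current subset. This forces non-identity matrices: when $\und A$ takes a step, the block for a target state $q'$ must be computed from the block of a (possibly different) source state $p$, so each $U(\und t)$ uses 0-1 ``transfer'' matrices $M(p,q')$ that move block $p$ to position $q'$ and zero out the rest, plus an additive shift for the contribution of the chosen $a$-labeled path from $p$ to $q'$. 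Finiteness of $\mon(U)$ then comes not from the matrices being (near-)identity but from the fact that 0-1 matrices with at most one nonzero entry per row are closed under product. Finally, the extra $(+1)$ coordinate records the minimal final state of $A$ in the current subset, and $E$ is defined so that membership tests the block of \emph{that} state against $C$; constraint-determinism is invoked only at this last step, to conclude that the arbitrary choices made when threads merge do not affect the verdict. Your proposal has the right reason why a representative suffices, but no mechanism that actually computes a representative's profile; that mechanism is the lemma.
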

\begin{proof}
  We outline the idea before giving the details.  Let $(A,C)$ be the
  $\eps$-CA.  We first apply the standard subset construction and obtain a
  deterministic FA $\und A$ equivalent to $A$.  Consider a state $\und q$ of
  $\und A$. 
  Suppose that after reading some word $w$ leading $\und A$ into state $\und q$
  we had, for each $q\in \und q$, the
  Parikh image $\bar{c_{w,q}}$ (counting transitions in $A$, i.e.,
  recording the occurrences of each transition in $A$) of \emph{some} initial
  $w$-labeled path leading $A$ into state $q$.
Suppose that $(\und q, a, \und r)$ is a transition in $\und A$.
How can we compute, for each $q'\in \und r$, the Parikh image
$\bar{c_{wa,q'}}$ of \emph{some} initial $wa$-labeled path leading $A$ into
$q'$?
It suffices to pick any $q\in \und q$ for which some $a$-labeled path leads
$A$ from $q$ to $q'$ (possibly using the $\eps$-transitions in $A$)
and to add to $\bar{c_{w,q}}$ the contribution of this $a$-labeled
path.
  A DetAPA transition on $a$ is
  well-suited to mimic this computation, since an affine transformation
  can first ``flip'' the current Parikh
  $q$-count tuple ``over'' to the
  Parikh $q'$-count tuple and then add to it the $q$-to-$q'$ contribution.
Hence a DetAPA $(\und A,\cdot,\cdot)$ upon reading a word $w$ leading to its
state $\und q$ is able to 
keep track, for each $q\in \und q$, of the Parikh image of \emph{some} initial
$w$-labeled path leading $A$ into $q$.
We need constraint-determinism only to reach the final conclusion: if a
word $w$ leads $\und A$ into a final state $\und q$, then some $q\in \und q$
is final in $A$, and
because of constraint-determinism, imposing membership in $C$ for the Parikh
image of the particular initial $w$-labeled path leading $A$ to $q$ kept
track of by the DetAPA 
is as good as imposing membership in $C$ for the Parikh image of any other
initial $w$-labeled path leading $A$ to $q$.

%  We outline the strategy before giving the details.  Let $(A,C)$ be the
%  $\eps$-CA.  We first apply the standard subset construction and obtain a
%  DFA $\und A$ equivalent to $A$.  Consider a state $\und q$ of $\und A$.
%  Suppose we had, for each $q\in \und q$, the Parikh image (counting
%  transitions in $A$) of an initial path labeled $w$ to a state $q$ of $A$.
%  How could we compute the Parikh image of an initial path to every $q'\in
%  \und q'$ where $(\und q, a, \und q')$ is a transition in $\und A$?  The
%  constraint-determinism property allows focussing on any initial path to $q$
%  when constructing a path to $q'$, since the semilinear constraint $C$ is
%  oblivious to the choice of the path in $A$ leading to an accepting state
%  for a fixed word.  Now the Parikh image of an initial path to $q'$ can be
%  computed by adding the Parikh image of a path to $q$ to the contribution
%  arising from proceeding from $q$ onto $q'$.  An APA transition is
%  well-suited to mimic this computation, since it can apply a transformation
%  matrix to map the accumulated ``Parikh count'' associated with $q$ to the
%  ``Parikh count'' associated with $q'$ and then can add the vector
%  contribution of the step from $q$ to $q'$.

  We now give the details.  Say $A=(Q,\Sigma, \delta, q_0, F)$, and identify
  $Q$ with $\{1, \ldots, |Q|\}$.

  For $p, q \in Q$, and $a \in \Sigma$, define $S(p, q, a)$ to be a shortest
  path (lexicographically smallest among shortest paths, for definiteness;
  this path can be longer than one because of $\eps$-transitions)
  from $p$ to $q$ labeled by $a$, or $\bot$ if none exists.  Let $\und A =
  (2^Q, \Sigma, \und\delta, \und{q_0}, \und F)$ be the deterministic version
  of $A$ defined by $\und{q_0} = \{q_0\}$, $\und\delta(\und p, a) = \{q \in Q
  \st (\exists p \in \und p)[S(p,q,a) \neq \bot]\}$ and $\und F = \{\und q
  \st \und q \cap F \neq \emptyset\}$; thus $L(\und A)$ is bounded.  Note
  that, by construction, for any path $\pi$ in $A$ from $q_0$ to $q$, there
  exists a path $\und\pi$ in $\und A$ from $\und{q_0}$ to a state $\und q$
  such that $q \in \und q$ and $\mu_A(\pi) = \mu_{\und A}(\und \pi)$.

  We now attach a function to each transition of $\und A$, where the functions
  are of dimension $(|Q|.|\delta| + 1)$.  We first define $V\colon \und
  \delta \to \func_{|Q|.|\delta|}$, and will later add the extra component.
  The intuition is as follows.  Consider a path $\und\pi$ on $\und A$ from
  the initial state to a state $\und q$ --- the empty path is considered to
  be from $\und{q_0}$ to $\und{q_0}$.  We view $(V(\und\pi))(\bar{0})$ as a
  list of counters $(\bar{c_1}, \ldots, \bar{c_{|Q|}})$ where $\bar{c_q} \in
  \bbn^{|\delta|}$.  We will ensure that for any $q \in \und q$, $\bar{c_q}$
  is the Parikh image of a path $\pi$ in $A$ from $q_0$ to $q$ such that
  $\mu_A(\pi) = \mu_{\und A}(\und \pi)$.  If two such paths $\pi_1$ and
  $\pi_2$ exist, we may choose one arbitrarily, as they are equivalent in the
  following sense: if $\rho$ is such that $\pi_1\rho \in \apaths(A)$ and
  $\pkh(\pi_1\rho) \in C$, then the same holds for $\pi_2$.

  For $\und p \subseteq Q$, $q \in Q$, and $a \in \Sigma$, let $P(\und p, q,
  a)$ be the smallest $p \in \und p$ such that $S(p, q, a) \neq \bot$ and
  $\bot$ if none exists.  Let $\und t = (\und p, a, \und q)$ be a transition
  of $\und A$.  Define:
  $$V(\und t) = \left(\sum_{q \in \und q} M(P(\und p, q, a),
    q), \qquad \sum_{q \in \und q} N(q, \pkh(S(P(\und p, q,
    a),q,a)))\right)$$%
  where $M(p, q)$ is the matrix which transfers the $p$-th counter to the
  $q$-th, and zeroes the others, and $N(q, \bar{d})$ is the shift of $\bar{d}
  \in \bbn^{|\delta|}$ to the $q$-th counter.  More precisely, $M(p,q)_{i, j}
  = 1$ iff there exists $1 \leq e \leq |\delta|$ such that $i = (q -
  1).|\delta| + e$ and $j = (p - 1).|\delta| + e$; likewise, $N(q,\bar{d}) =
  (0^{(q - 1).|\delta|}) \cdot \bar{d} \cdot (0^{(|Q| - q).|\delta|})$.  The
  matrices appearing in $V$ are 0-1 matrices with at most one nonzero entry
  per row; composing such matrices preserves this property, thus $\mon(V)$ is
  finite.

  \begin{claim*}
    Let $\und\pi$ be a path on $\und A$ from $\und{q_0}$ to some state $\und
    q$.  Let $(\bar{c_1}, \ldots, \bar{c_{|Q|}}) = (V(\und\pi))(\bar{0})$,
    where $\bar{c_q} \in \bbn^{|\delta|}$.  Then for all $q \in \und q$,
    $\bar{c_q}$ is the Parikh image of a path in $A$ from $q_0$ to $q$
    labeled by $\mu(\und\pi)$.
  \end{claim*}

  We show this claim by induction.  If $|\und\pi| = 0$, then the final state
  is $\{q_0\}$ and $\bar{c_{q_0}}$ is by definition all-zero.  This describes
  the empty path in $A$, from $q_0$ to $q_0$.  Let $\und\pi$ be such that
  $|\und\pi| > 0$, and consider a state $q$ in the final state of $\und\pi$.
  Write $\und\pi = \und\rho\cdot\und t$, with $\und t \in \und\delta$, and
  let $p = P(\dest(\und \rho), q, \mu(t))$ and $\zeta = S(p, q, \mu(t))$.
  The induction hypothesis asserts that the $p$-th counter of $(V(\und
  \rho))(\bar{0})$ is the Parikh image of a path $\rho$ on $A$ from $q_0$ to
  $p$ labeled by $\mu(\und\rho)$.  Thus, the $q$-th counter of $(V(\und
  \pi))(\bar{0})$ is $\pkh(\rho) + \pkh(\zeta)$, which is the Parikh image of
  $\rho\zeta$, a path from $q_0$ to $q$ labeled by $\mu(\und\pi)$.  This
  concludes the proof of this claim.

  We now define $U\colon \und\delta \to \func_{|Q|.|\delta| + 1}$.  We add a
  component to the functions of $V$, such that for $\und\pi \in \apaths(\und
  A)$, the last component of $(U(\und \pi))(\bar{0})$ is $0$ if $\dest(\und
  \pi) \cap F = \emptyset$ and $\min(\dest(\und \pi) \cap F)$ otherwise.  For
  $\und t = (\und p, a, \und q) \in \und\delta$, let:
  $$U(\und t): (\bar{x}, s) \mapsto \left((V(\und t))(\bar{x}),\qquad
    \begin{cases}
      q&\text{if $q$ is the smallest s.t. $q \in \und q \cap F$,}\\
      0&\text{if no such $q$ exists.}
    \end{cases}\right)$$%
  Now define $E \subseteq \bbn^{|Q|.|\delta| + 1}$ to be such that
  $(\bar{v_1}, \ldots, \bar{v_{|Q|}}, q) \in E$ iff $\bar{v_q} \in C$.  We
  adjoin $\bar{0}$ to $E$ iff $\bar{0} \in C$, in order to deal with the
  empty word.  A word $w$ is accepted by the DetAPA $(\und A, U, E)$ iff
  there exists a path in $A$ from $q_0$ to $q \in F$, labeled by $w$, and
  whose Parikh image belongs in $C$, i.e., $w \in L(A, C)$.

  Finally, note that $L(\und A) = L(A)$ and that $\mon(U)$ is finite: the
  extra component of $U$ only adds a column and a row of $0$'s to the
  matrices.
\end{proof}

Before concluding with Lemma~\ref{lem:dapafmbd=dpabd}, let us first recall
the following results and definitions:
\begin{lemma}[{\cite[Lemmata 5.5.1 and 5.5.4]{ginsburg66}}]\label{lem:uvz}
  Let $u, v \in \Sigma^*$.  Then $(u + v)^*$ is bounded iff there exists $z
  \in \Sigma^*$ such that $u, v \in z^*$.
\end{lemma}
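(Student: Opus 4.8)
The plan is to prove the two implications separately: the reverse implication is immediate, while the forward implication rests on contrasting the exponential growth of a non-commuting pair against the polynomial sparsity of bounded languages.

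For the easy direction ($\Leftarrow$), I would note that if $u = z^p$ and $v = z^q$ for some $z \in \Sigma^*$ and $p,q \geq 0$, then every concatenation of copies of $u$ and $v$ is again a power of $z$, so $(u+v)^* \subseteq z^*$; a single-generator product is bounded by definition, so we are done.

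For the forward direction ($\Rightarrow$), I would argue by contraposition. The key classical input is the \emph{commutation lemma}: two words satisfy $uv = vu$ if and only if they are both powers of a common word (provable by induction on $|u|+|v|$ via Levi's lemma, or simply citable). So, assuming $u$ and $v$ are \emph{not} both powers of a common word, we get $uv \neq vu$. I would then observe that $uv$ and $vu$ are two \emph{distinct} words of the same length $m = |u| + |v|$, and that any set of distinct equal-length words is a code: a word of length $km$ obtained as a product of $k$ such blocks factors uniquely into its length-$m$ segments, so distinct sequences over $\{uv, vu\}$ produce distinct products. Consequently the $2^k$ sequences in $\{uv,vu\}^k$ yield $2^k$ distinct words, all lying in $(u+v)^*$ and all of length exactly $km$. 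I would also record here that the non-commutation hypothesis already forces both $u$ and $v$ to be nonempty (the empty word commutes with everything and is the $0$-th power of any word), so $m \geq 2$ and the counting below is valid.

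This growth must then be confronted with the sparsity of bounded languages. I would show that if $L \subseteq w_1^*\cdots w_n^*$, then $L$ contains at most $(N+1)^n$ words of length $N$: after discarding any empty $w_i$, each word of length $N$ arises from some tuple $(i_1,\ldots,i_n)$ with $\sum_j i_j |w_j| = N$, and the number of such tuples is at most $(N+1)^n$. Taking $N = km$ would then force $2^k \leq (km+1)^n$, which fails for large $k$. Hence $(u+v)^*$ cannot be bounded when $uv \neq vu$, establishing the contrapositive and completing the forward direction. The main obstacle is really just the commutation lemma; the uniform-code observation and the polynomial bound on bounded languages are elementary, so I would either cite the commutation lemma directly or insert its short inductive proof.
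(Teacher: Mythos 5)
Your proof is correct, but note that the paper itself offers no proof of this lemma: it is imported directly from Ginsburg's 1966 book (Lemmata 5.5.1 and 5.5.4), so there is no internal argument to compare against. Your write-up is thus a self-contained replacement for that citation, and it holds up. The reverse direction via $(u+v)^* \subseteq z^*$ is immediate. For the forward direction, the chain ``no common root $\Rightarrow uv \neq vu$ (commutation lemma) $\Rightarrow \{uv, vu\}$ is a uniform code $\Rightarrow (u+v)^*$ contains $2^k$ distinct words of length $k(|u|+|v|)$'' is valid, and the sparsity bound is also valid: any language inside $w_1^*\cdots w_n^*$ with all $w_j$ nonempty (which matches the paper's definition of a socle, where $w_j \in \Sigma^+$) has at most $(N+1)^n$ words of length $N$, since each such word arises from at least one exponent tuple $(i_1, \ldots, i_n)$ with $\sum_j i_j |w_j| = N$ and each $i_j \leq N$. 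The exponential-versus-polynomial clash then settles the contrapositive, and your handling of the degenerate case (if $u$ or $v$ is empty, a common root exists trivially) closes the remaining gap. The one external ingredient, the commutation lemma $uv = vu \Leftrightarrow u, v \in z^*$ for some $z$, is classical and you correctly flag it as either citable or provable by a short induction via Levi's lemma. What your approach buys is independence from Ginsburg's text, at the modest cost of invoking one standard combinatorics-on-words fact; what the paper's approach buys is brevity, since the result is folklore in that literature.
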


\begin{definition}[\cite{finkel-iyer-sutre03}]
  Let $\Gamma$ be an alphabet.  A \emph{semilinear regular expression (SLRE)}
  is a finite set of \emph{branches}, defined as expressions of the form
  $y_0x_1^*y_1x_2^*y_2\cdots
  x_n^*y_n$, where $x_i \in \Gamma^+$ and $y_i \in \Gamma^*$.  The language
  of an SLRE is the union of the languages of each of its branches.
\end{definition}

\begin{theorem}[\cite{finkel-iyer-sutre03}]
  A regular language is bounded iff it is expressible as a SLRE.
\end{theorem}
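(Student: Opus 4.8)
The plan is to prove the two implications separately; the converse is routine and the forward direction carries the real content. For the direction ``SLRE $\Rightarrow$ bounded regular,'' I would observe that a single branch $y_0 x_1^* y_1 \cdots x_n^* y_n$ denotes a regular language contained in $y_0^* x_1^* y_1^* \cdots x_n^* y_n^*$, hence bounded. Since a finite union of regular languages is regular, and a finite union of bounded languages is bounded (concatenate their socles), the language of any SLRE is bounded and regular.

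For the direction ``bounded regular $\Rightarrow$ SLRE,'' let $L$ be regular with socle $w_1, \ldots, w_n$, so that $L \subseteq w_1^* \cdots w_n^*$ with each $w_j \in \Sigma^+$, and fix a complete DFA $A = (Q, \Sigma, \delta, q_0, F)$ for $L$. The key idea is to track only the state reached at the boundaries between the blocks $w_j^{i_j}$. For each $j$, let $f_j \colon Q \to Q$ send $q$ to the state reached from $q$ after reading $w_j$; then the state reached after $w_1^{i_1} \cdots w_n^{i_n}$ is $f_n^{i_n} \circ \cdots \circ f_1^{i_1}(q_0)$, and membership in $L$ amounts to this state lying in $F$. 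Because $Q$ is finite, the sequence $f_j^0, f_j^1, f_j^2, \ldots$ is eventually periodic: there are a threshold $s_j$ and a period $p_j \geq 1$ with $f_j^{i} = f_j^{i + p_j}$ for all $i \geq s_j$. This lets me partition the exponents $i_j \in \bbn$ into the finitely many classes $\{0\}, \ldots, \{s_j - 1\}$ together with one class $\{i \geq s_j : i \equiv r \pmod{p_j}\}$ per residue $r$, inside each of which $f_j^{i_j}$ is a constant function.

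I would then call a tuple of classes $(\tau_1, \ldots, \tau_n)$ a \emph{type}; there are finitely many types. Within a fixed type every $f_j^{i_j}$ is constant, so the composite $f_n^{i_n} \circ \cdots \circ f_1^{i_1}(q_0)$ depends only on the type, and hence so does acceptance. For each accepting type the set of witnessing words is a product of blocks, where a singleton class contributes a fixed power $w_j^{c}$ and a residue class contributes $w_j^{r'}(w_j^{p_j})^*$ with $r' \in [s_j, s_j + p_j)$; concatenating these blocks yields exactly an SLRE branch (the fixed powers merge into the $y_i$, and each $w_j^{p_j} \in \Sigma^+$ serves as an $x_i$). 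Taking the union over the finitely many accepting types produces an SLRE, and a short set-equality check shows it denotes exactly $L$: each word of $L$ admits at least one decomposition in $w_1^* \cdots w_n^*$, and acceptance of that decomposition forces its type to be accepting, so the word is captured.

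The main obstacle is conceptual rather than computational: one must notice that, since $A$ is fed only words of the form $w_1^{i_1} \cdots w_n^{i_n}$, the relevant data reduces to the finitely many block-transition functions $f_j$, whose powers are eventually periodic. This is also where regularity is genuinely used: merely knowing that $\iter_{(w_1,\ldots,w_n)}(L)$ is semilinear does not suffice, since e.g.\ $\{a^m b^m \st m \geq 0\}$ has semilinear iteration set yet is not regular and is not an SLRE. A secondary point to verify is that distinct decompositions of a single word cause no trouble, which holds because only set equality of the denoted languages is needed.
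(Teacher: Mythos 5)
The paper does not prove this statement at all: it is imported as a black box from \cite{finkel-iyer-sutre03}, so there is no internal proof to compare yours against, and I evaluate it on its own terms. Your argument is correct. The easy direction is fine: each branch is regular and is contained in $y_0^*x_1^*y_1^*\cdots x_n^*y_n^*$, and both regularity and boundedness survive finite unions (for boundedness, concatenate the socles). For the hard direction, the reduction to the block-transition functions $f_j\colon Q \to Q$ is sound: completeness of the DFA makes each $f_j$ total; finiteness of the transformation monoid on $Q$ gives the thresholds $s_j$ and periods $p_j$; within a class $f_j^{i_j}$ is a fixed function, so acceptance depends only on the type; and each accepting type yields a genuine branch, since the socle words are in $\Sigma^+$, so $w_j^{p_j}$ is a legal $x_i$, while the fixed powers $w_j^c$ and the offsets $w_j^{r'}$ absorb into the $y_i$. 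You also correctly dispose of the one subtle point, multiple decompositions of the same word: any decomposition of a word of $L$ lands in an accepting type (the DFA reaches the same final state along it), and every accepting type contributes only words of $L$, so set equality holds. For comparison, the proofs of this equivalence one usually finds in the literature do not assume a socle is handed to them: they characterize bounded (equivalently, polynomial-density) regular languages structurally, by showing that in a suitable automaton no state lies on two essentially distinct cycles, and then read the SLRE off that flat loop structure. Your route exploits the given socle to bypass that structural analysis entirely, which makes the argument shorter and more elementary; the structural route buys more, namely the automaton-shape characterization that this paper's notion of flat automata echoes, but that extra strength is not needed for the statement as quoted.
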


We will need the following technical lemma.  Bounded languages being closed
under morphisms, for all automata $A$ if $\apaths(A)$ is bounded then so is
$L(A)$.  The converse is true when $A$ is deterministic (and false
otherwise):
\begin{lemma}\label{lem:apathbounded} Let $A$ be a deterministic
  automaton for a bounded language, then $\apaths(A)$ is bounded.  Moreover,
  $\apaths(A)$ is expressible as a SLRE whose branches are of the form
  $\rho_0\pi_1^*\rho_1\cdots \pi_n^*\rho_n$ where $\rho_i \neq \eps$ for all
  $1 \leq i < n$ and the first transition of $\pi_i$ differs from that of
  $\rho_i$ for every $i$ (including $i=n$ if $\rho_i \neq \eps$).
  % Additionally, for $0 \leq i < j \leq n$ with $\rho_i, \rho_j \neq \eps$,
  % we have $\dest(\rho_i) \neq \dest(\rho_j)$.
\end{lemma}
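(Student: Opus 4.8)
The plan is to treat $A$ itself as a deterministic device reading its own transitions: a letter $t\in\delta$ read from state $\orig(t)$ moves to $\dest(t)$ and is undefined elsewhere, so $\apaths(A)$ is the language over the alphabet $\delta$ of a deterministic (partial) automaton on state set $Q$. We may assume $A$ is trim (every state lies on an accepting path), since discarding useless states changes neither $\apaths(A)$ nor $L(A)$. The heart of the argument is a structural statement on cycles.

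\emph{Structural claim.} For every state $q$ carrying a cycle there is a shortest cycle $\sigma_q$ at $q$, and every cycle at $q$ (as a path in $\delta^*$) is a power of $\sigma_q$. To prove it, fix two cycles $\gamma_1,\gamma_2$ at $q$ and, using trimness, paths $\alpha$ from $q_0$ to $q$ and $\beta$ from $q$ to $F$. Then $\mu(\alpha)\,(\mu(\gamma_1)+\mu(\gamma_2))^*\,\mu(\beta)\subseteq L(A)$, which is bounded; since pre- and post-concatenating fixed words cannot turn an unbounded language into a bounded one, $(\mu(\gamma_1)+\mu(\gamma_2))^*$ is bounded, so by Lemma~\ref{lem:uvz} there is a word $z$ with $\mu(\gamma_1),\mu(\gamma_2)\in z^*$. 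Taking $\gamma_1=\sigma_q$ (so $\mu(\sigma_q)=z^a$) and $\gamma_2=\gamma$ an arbitrary cycle ($\mu(\gamma)=z^b$), determinism makes ``read $z$'' a partial function on $Q$ fixing $q$ after $a$ and after $b$ steps, hence after $\gcd(a,b)$ steps; the resulting cycle of label $z^{\gcd(a,b)}$ cannot be shorter than $\sigma_q$, forcing $\gcd(a,b)=a$, i.e. $a\mid b$ and $\mu(\gamma)=\mu(\sigma_q)^{b/a}$. Reading the same label from $q$ in the deterministic $A$ yields a unique path, so $\gamma=\sigma_q^{\,b/a}$. Determinism is used twice here, and it is exactly what fails in the nondeterministic case.

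Next I would read off the SLRE. The claim shows that whenever an accepting path revisits a state $q$, the intervening portion is a power of $\sigma_q$; consequently, after deleting all such maximal loops the remaining \emph{backbone} visits each state at most once and so has length $<|Q|$. There are therefore finitely many backbones, and every accepting path arises from one of them by inserting, at the states it traverses, some number of iterations of the relevant $\sigma_q$. This already expresses $\apaths(A)$ as a finite union of languages $\rho_0\sigma_{s_1}^*\rho_1\cdots\sigma_{s_k}^*\rho_k$ with the inner connectors nonempty (distinct backbone states are joined by at least one transition); in particular $\apaths(A)$ is a SLRE and is bounded, which gives the first assertion and matches the quoted characterization of bounded regular languages.

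It remains to secure the condition that the first transition of each loop differs from that of the following connector, and this is where I expect the main difficulty to lie. I would handle it by \emph{rebasing each loop at the state where the path leaves the cycle} rather than where it enters: if the path enters the cycle at $s$, runs $\sigma_s^m$, follows a prefix $\eta$ of $\sigma_s$ to a state $s'$ on the cycle, and exits via a transition $t'$, I rewrite the factor so that $\eta$ is absorbed into the preceding connector and the loop becomes the rotation $\pi_i$ of $\sigma_s$ based at $s'$. Then the first transition of $\pi_i$ is the cycle-transition out of $s'$, whereas $t'$ opens $\rho_i$ and reads a different letter, so by determinism $t'$ differs from the first transition of $\pi_i$; moreover $\rho_i\neq\eps$ because it begins with $t'$, the sole exception being $i=n$ when the path terminates inside the loop and $\rho_n=\eps$. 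Verifying that this rebasing preserves finiteness of the branch set and nonemptiness of the inner connectors, and organizing the finitely many exit points into finitely many branches, is the remaining bookkeeping; the conceptual content is entirely in the structural claim and the exit-rebasing trick.
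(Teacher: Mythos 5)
Your proposal is correct, but it reaches the SLRE by a genuinely different route than the paper. The paper works top-down: it first obtains boundedness of $\apaths(A)$ abstractly, from closure of bounded languages under \emph{deterministic} rational transductions (Ginsburg's Lemma~5.5.3, via the automaton $A'$ whose transitions output $(a,t)$ pairs), then invokes the quoted theorem that bounded regular languages are SLRE-expressible, and finally normalizes an arbitrary branch by (i) taking $n$ minimal among all star-expressions sandwiched between the branch and $\apaths(A)$, (ii) the same right-to-left common-prefix rotation you use, and (iii) a star-merging claim ($X\pi_1^*\pi_2^*Y\subseteq X\pi^*Y\subseteq\apaths(A)$, proved with Lemma~\ref{lem:uvz} applied to the loops themselves), so that minimality forces the inner connectors to be nonempty. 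You work bottom-up: your structural claim --- in a trim deterministic automaton with bounded language, every cycle at a state $q$ is a power of one primitive cycle $\sigma_q$, proved with Lemma~\ref{lem:uvz} on labels plus the gcd/determinism argument --- lets you read the SLRE directly off backbones, so boundedness of $\apaths(A)$ comes out as a corollary rather than being an input. Your route is more self-contained (it needs neither Ginsburg's closure lemma nor the cited SLRE characterization) and it pinpoints exactly where determinism is used; the paper's route avoids your combinatorial analysis by leaning on the two cited results and on minimality.

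One caution about what you dismiss as bookkeeping: rebasing can genuinely empty an \emph{inner} connector. This happens exactly when the next loop state $s_{i+1}$ lies on the cycle $\sigma_{s_i}$, so that $\rho_i$ is a prefix of $\sigma_{s_i}$; after absorbing $\rho_i$ into the preceding connector you are left with two adjacent stars, and you must merge them into one. The merge is sound precisely because of your structural claim: the rotation of $\sigma_{s_i}$ based at $s_{i+1}$ is a cycle at $s_{i+1}$, hence equals $\sigma_{s_{i+1}}^c$ for some $c\geq 1$, and $(\sigma_{s_{i+1}}^c)^*\sigma_{s_{i+1}}^* = \sigma_{s_{i+1}}^*$ as languages. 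So the tools are already in your proof, but this merging step (iterated along maximal runs of backbone states sharing one cycle, which is what your path-wise ``exit-rebasing'' does implicitly) is the real content of the deferred verification, not a routine check; it plays the role that the minimality-of-$n$ argument together with the merging claim plays in the paper.
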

\begin{proof}
  Recall that bounded languages are closed under \emph{deterministic}
  rational transduction (see, e.g., \cite[Lemma~5.5.3]{ginsburg66}).  Now
  consider $A'$ the automaton defined as a copy of $A$ except that its
  transitions are $(q, (a, t), q')$ for $t = (q, a, q')$ a transition of $A$.
  Then $\tau_{A'}$, the deterministic rational transduction defined by $A'$,
  is such that $\apaths(A) = \tau_{A'}(L(A))$, and thus, $\apaths(A)$ is
  bounded.

  It will be useful to note the claim that if $X\pi_1^*\pi_2^*Y\subseteq
  \apaths(A)$ for some nontrivial paths $\pi_1,\pi_2$ and some bounded
  languages $X$ and $Y$, then for some path $\pi$, $X\pi_1^*\pi_2^*Y\subseteq
  X\pi^*Y \subseteq \apaths(A)$.  To see this, note that if
  $X\pi_1^*\pi_2^*Y\subseteq \apaths(A)$ then $X(\pi_1+\pi_2)^*Y\subseteq
  \apaths(A)$ because $\pi_1$ and $\pi_2$ are loops on a same state.  Now
  $X(\pi_1+\pi_2)^*Y$ is bounded because $\apaths(A)$ is bounded, hence
  $(\pi_1+\pi_2)^*$ is bounded.  So pick $\pi$ such that $\pi_1,\pi_2\in
  \pi^*$ (by Lemma~\ref{lem:uvz}).  Then $X(\pi_1+\pi_2)^*Y\subseteq
  X\pi^*Y$.  But $\pi$ is a loop in $A$ because $\pi_1=\pi^j$ for some $j>0$
  is a loop so that $\orig(\pi)=\dest(\pi)$ in $A$. Hence $X\pi^*Y\subseteq
  \apaths(A)$.  Thus $X\pi_1^*\pi_2^*Y\subseteq X(\pi_1+\pi_2)^*Y\subseteq
  X\pi^*Y\subseteq \apaths(A)$.

  Let $E$ be a SLRE for $\apaths(A)$, and consider one of its branches $P =
  \rho_0\pi_1^*\rho_1\cdots \pi_n^*\rho_n$.  We assume $n$ to be minimal
  among the set of all $n'$ such that $\rho_0\pi_1^*\rho_1\cdots
  \pi_n^*\rho_n\subseteq \rho_0'\pi_1'^*\rho_1'\cdots
  \pi_{n'}'^*\rho_{n'}'\subseteq \apaths(A)$ for some
  $\rho_0',\pi_1',\ldots,\pi_{n'}',\rho_{n'}'$.

  First we do the following for $i=n, n-1, \ldots, 1$ in that order.  If
  $\pi_i=\zeta\pi$ and $\rho_i = \zeta\rho$ for some maximal nontrivial path
  $\zeta$ and for some paths $\pi$ and $\rho$, we rewrite
  $\rho_{i-1}\pi_i^*\rho_i$ as $\rho_{i-1}'\pi_i'^*\rho_i'$ by letting
  $\rho'_{i-1} = (\rho_{i-1}\zeta)$, $\pi'_i = (\pi\zeta)$ and $\rho'_i =
  \rho$. This leaves the language of $P$ unchanged and ensures at the $i$th
  stage that the first transition of $\pi'_j$ (if any) differs from that of
  $\rho'_j$ (if any) for $i\leq j\leq n$.  Note that $n$ has not changed.

  Let $\rho_0'\pi_1'^*\rho_1'\cdots \pi_n'^*\rho_n'$ be the expression for
  $P$ resulting from the above process.  By the minimality of $n$,
  $\pi_i'\neq \eps$ for $1\leq i\leq n$.  And for the same reason,
  $\rho_i'\neq \eps$ for $1\leq i<n$, since $\rho_i'=\eps$ implies
  $X\pi_i'^*\pi_{i+1}'^*Y\subseteq Xz^*Y \subseteq \apaths(A)$ for some $z$
  by the claim above, where $X=\rho_0'\cdots \pi_{i-1}'^*\rho_{i-1}'$ and
  $Y=\rho_{i+1}'\pi_{i+2}'^*\cdots\rho_n'$ are bounded languages.
  % Lastly, suppose to the contrary that $\dest(\rho_i') = \dest(\rho_j')$
  % for some $0 \leq i < j \leq n$ and $\rho_j'\neq\eps$.  Then $\pi_{i+1}'$
  % and $\rho = \rho_{i+1}'\rho_{i+2}'\cdots\rho_j'$ are both loops on a same
  % state and their first transitions differ.  By Lemma~\ref{lem:uvz},
  % $(\pi_{i+1}' + \rho)^*$ is not bounded.  Hence $X(\pi_{i+1}' + \rho)^*Y$
  % for $X=\rho_0'\cdots \pi_i'^*\rho_i'$ and
  % $Y=\pi_{j+1}'^*\rho_{j+1}'\cdots\rho_n'$ is not bounded.  But
  % $X(\pi_{i+1}' + \rho)^*Y \subseteq \apaths(A)$ and $\apaths(A)$ is
  % bounded, a contradiction.
\end{proof}

We are now ready to prove the following, thus concluding the proof of
Theorem~\ref{thm:main}:
\begin{lemma}\label{lem:dapafmbd=dpabd}
  Let $(A, U, C)$ be a DetAPA such that $L(A)$ is bounded and $\mon(U)$ is
  finite.  Then there exist a finite number of flat DetCA the union of the
  languages of which is $L(A, U, C)$.
\end{lemma}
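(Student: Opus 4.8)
The plan is to combine the structural decomposition of $\apaths(A)$ provided by Lemma~\ref{lem:apathbounded} with an analysis of the affine transformation accumulated along each loop, the finiteness of $\mon(U)$ being what keeps that accumulation under control. First I would apply Lemma~\ref{lem:apathbounded} to the deterministic automaton $A$ (whose language is bounded) to write $\apaths(A)$ as a finite union of branches $B=\rho_0\pi_1^*\rho_1\cdots\pi_n^*\rho_n$ over the transition alphabet $\delta$, with $\rho_i\neq\eps$ for $1\le i<n$ and the first transition of $\pi_i$ differing from that of $\rho_i$ for every $i$. It then suffices to produce, for each branch $B$, a single flat DetCA $(A_B,C_B)$ with $L(A_B,C_B)=\{\mu(\pi)\st \pi\in B \land (U(\pi))(\bar 0)\in C\}$: since $\bigcup_B B=\apaths(A)$ and $A$ is deterministic (so each word has at most one accepting path), the union of the finitely many $L(A_B,C_B)$ is exactly $L(A,U,C)$.

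Second, fixing a branch $B$, I would study the map sending $(i_1,\dots,i_n)\in\bbn^n$ to $(U(\pi))(\bar 0)$, where $\pi=\rho_0\pi_1^{i_1}\rho_1\cdots\pi_n^{i_n}\rho_n$. Writing $U(\pi_j)=(M_j,\bar v_j)$, a direct computation gives $U(\pi_j)^{i_j}=(M_j^{i_j},\ S_{i_j}\bar v_j)$ with $S_{i_j}=\sum_{k=0}^{i_j-1}M_j^k$, and composing the blocks shows that the matrix part of $U(\pi)$ is a product of matrices of $\mon(U)$, while its vector part, which equals $(U(\pi))(\bar 0)$, is a sum with one term per block, each term being such a matrix product applied to that block's vector part. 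Since $\mon(U)$ is finite, each power $M_j^{i_j}$ is eventually periodic: there are $p_j\ge 0$ and $q_j\ge 1$ with $M_j^{k+q_j}=M_j^k$ for $k\ge p_j$. Splitting each $i_j$ either into one of the small values $\{0,\dots,p_j-1\}$ or into the form $p_j+r_j+m_jq_j$ with a fixed residue $r_j$ and $m_j\ge 0$ yields finitely many \emph{patterns}. Within one pattern every matrix is constant and $S_{i_j}\bar v_j=S_{p_j+r_j}\bar v_j+m_j\bigl(\sum_{\text{one period}}M_j^k\bigr)\bar v_j$ is affine in $m_j$, so $(U(\pi))(\bar 0)$ is an affine function of $(m_1,\dots,m_n)$ with no cross terms. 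Hence the graph $\{\,((i_1,\dots,i_n),(U(\pi))(\bar 0))\,\}$ is semilinear, and intersecting with $\bbn^n\times C$ and projecting onto the first coordinates shows that $D_B=\{(i_1,\dots,i_n)\st (U(\pi))(\bar 0)\in C\}$ is semilinear.

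Third, I would realize $B$ as a flat DetCA reading labels over $\Sigma$: a chain spelling $\mu(\rho_0)$, a loop spelling $\mu(\pi_1)$, a chain spelling $\mu(\rho_1)$, and so on, with fresh states, yielding a flat automaton $A_B$. Because $A$ is deterministic, two distinct transitions leaving the same state carry distinct letters of $\Sigma$; thus the condition ``the first transition of $\pi_i$ differs from that of $\rho_i$'' transfers to the $\Sigma$-labels and makes $A_B$ deterministic at every loop-exit point, so $A_B$ is a flat DetCA whose accepting paths are in bijection with $(i_1,\dots,i_n)\in\bbn^n$, the path for $(i_1,\dots,i_n)$ carrying the label $\mu(\rho_0)\mu(\pi_1)^{i_1}\cdots\mu(\pi_n)^{i_n}\mu(\rho_n)=\mu(\pi)$. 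Letting $t_j$ be the first transition of the $j$-th loop of $A_B$, the iteration count $i_j$ equals $\#t_j$ in the corresponding path, so defining $C_B$ as the semilinear set of transition-Parikh-images whose $(\#t_1,\dots,\#t_n)$-projection lies in $D_B$ gives $L(A_B,C_B)=\{\mu(\pi)\st \pi\in B\land(U(\pi))(\bar 0)\in C\}$; taking the union over all branches completes the proof.

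I expect the main obstacle to be the second step, namely controlling $(U(\pi))(\bar 0)$ as $(i_1,\dots,i_n)$ ranges over $\bbn^n$. Finiteness of $\mon(U)$ is precisely what tames the matrix powers, but one must verify carefully that, within each pattern, the accumulated vector part is genuinely \emph{affine} in the multiplicities and free of cross terms, so that its image is a finite union of affine images and hence semilinear. The remaining ingredients, the determinism and flatness of $A_B$ and the pull-back of $D_B$ to a constraint on transition counts, are routine given Lemma~\ref{lem:apathbounded} and closure of semilinear sets under the Boolean operations and projection.
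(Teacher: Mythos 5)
Your proof is correct, and it reuses the paper's two essential ingredients --- the branch decomposition $\rho_0\pi_1^*\rho_1\cdots\pi_n^*\rho_n$ of $\apaths(A)$ from Lemma~\ref{lem:apathbounded}, and the finiteness of $\mon(U)$ yielding eventual periodicity of the powers $M_{\pi_j}^{i_j}$, hence affineness (without cross terms) of the accumulated vector in the loop multiplicities within each residue class --- but it packages them genuinely differently. The paper hard-codes the residues into the automata: for each branch and each residue tuple $\bar{a}$ it builds a separate flat deterministic automaton $B_{\bar{a}}$ (with the loop on $\pi_i^{r_i}$ present only when $a_i \geq p_i$), decorates its transitions with the constant affine contributions $V(\cdot,\cdot)$ so as to obtain a flat DetPA whose constraint set is the \emph{original} $C$, and finally converts these flat DetPA into flat DetCA by invoking the flat version of Theorem~\ref{thm:altdef}. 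You instead build a \emph{single} flat DetCA per branch, whose loops just count the multiplicities $(i_1,\ldots,i_n)$, and you push the entire case analysis into the constraint set: you show the graph of $(i_1,\ldots,i_n) \mapsto (U(\pi))(\bar{0})$ is semilinear (finitely many patterns, each giving a linear set), and obtain the constraint $D_B$ by intersecting with $\bbn^n \times C$ and projecting. Your route buys a smaller union (one DetCA per branch instead of $\prod_i (p_i+r_i)$ many) and avoids both the PA-with-vector-labels intermediate and the appeal to the flat case of Theorem~\ref{thm:altdef}; what it costs is an explicit appeal to closure of semilinear sets under intersection and projection, which the paper's construction never needs but which is standard (e.g., via the Presburger characterization given in the preliminaries). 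The determinism argument for the flat automata is the same in both proofs: determinism of $A$ together with the ``distinct first transitions'' property of the branches guaranteed by Lemma~\ref{lem:apathbounded}. One cosmetic remark: your appeal to determinism of $A$ (``each word has at most one accepting path'') when taking the union over branches is unnecessary --- the language $L(A,U,C)$ is defined by an existential quantification over accepting paths, so the union over branches is exact for any $A$.
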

\begin{proof}
  Let $A = (Q_A, \Sigma, \delta, q_0, F_A)$ be a deterministic automaton
  whose language is bounded, let $U\colon \delta \to \func_d$ for some $d >
  0$ such that $\mon(U)$ is finite, and let $C \subseteq \bbn^d$ be a
  semilinear set.
  
  Consider the set $\apaths(A)$ of accepting paths in $A$; it is, by
  Lemma~\ref{lem:apathbounded}, a bounded language.  Let $P$ be the language
  defined as a branch
  $\rho_0\pi_1^*\rho_1\cdots \pi_n^*\rho_n$ of the SLRE for
  $\apaths(A)$ given by Lemma~\ref{lem:apathbounded}.  We will first
  construct a finite number of flat DetPA for this branch, such that their
  union has language $\{\mu(\pi) \st \pi \in P \land (U(\pi))(\bar{0}) \in
  C\}$.  For simplicity, we will assume that $\rho_0 = \eps$ and $\rho_n \neq
  \eps$.  We will later show how to lift this restriction.

  For $t \in \delta$, we write $U(t) = (M_t, \bar{v_t})$, and more generally,
  for $\pi = t_1\cdots t_k \in \delta^*$, we write $U(\pi) = (M_\pi,
  \bar{v_\pi})$; in particular, $M_\pi = M_{t_k}\ldots M_{t_1}$.

  Let $p_i, r_i \in \bbn^+$, for $1 \leq i \leq n$, be integers such that
  $M_{\pi_i}^{p_i} = M_{\pi_i}^{p_i+r_i}$; such integers are guaranteed to
  exist as $\mon(U)$ is finite.  Now let $a_i$ be in $\{0, \ldots, p_i + r_i
  - 1\}$, for $1 \leq i \leq n$, and note, as usual, $\bar{a} = (a_1, \ldots,
  a_n)$.  For succinctness, we give constructions where the labels of the
  transitions can be nonempty \emph{words}.  This is to be understood as a
  string of transitions, with fresh states in between, with the following
  specificity: for $w = \ell_1\cdots \ell_k$, an extended-transition $(q, (w,
  \bar{d}), q')$ in a PA is a string of transitions the first of which is
  labeled $(\ell_1, \bar{d})$ and the other ones $(\ell_i, \bar{0})$, $i \geq
  2$.

  Let $B_{\bar{a}}$ be the ordinary flat automaton with state set $Q = \{q_i,
  q_i' 
  \st 1 \leq i \leq n\} \cup \{q_{n+1}\}$, and with transition set $\delta_B$
  defined by:
  $$\delta_B = \{(q_i, \pi_i^{a_i}, q_i'), (q_i', \rho_i, q_{i+1}) \st 1 \leq i \leq n\} \cup
  \{(q_i', \pi_i^{r_i}, q_i') \st 1 \leq i \leq n \land a_i \geq p_i\}.$$ We
  let the initial state be $q_1$ and the final state be $q_{n+1}$; note that
  $B_{\bar{a}}$ is deterministic, thanks to the form of $P$ given by
  Lemma~\ref{lem:apathbounded}.

  So $L(B_{\bar{a}}) =
  \{\pi_1^{a_1+m_1.r_1}\rho_1\cdot\hspace{-1pt}\cdot\hspace{-1pt}\cdot
  \pi_n^{a_n+m_n.r_n}\rho_n | (\forall i)[m_i \in \bbn \land (a_i < p_i
  \rightarrow m_i = 0)]\}$, and the union of the languages of the
  $B_{\bar{a}}$'s for all valid values of $\bar{a}$ is $P$.  We now make a
  simple but essential observation on $(U(\pi))(\bar{0})$.  For paths $\pi$
  and $\rho$, we write $V(\pi, \rho)$ for $M_\rho.\bar{v_\pi}$.  Then let
  $\pi \in L(B_{\bar{a}})$, and let $m_i$, for $1 \leq i \leq n$, be such
  that $\pi = \pi_1^{a_1 + m_1.r_1}\rho_1\cdots \pi_n^{a_n+ m_n.r_n}\rho_n$.
  First note that $V(\cdot, \cdots\pi_i^{a_i+m_i.r_i}\cdots) = V(\cdot,
  \cdots\pi_i^{a_i}\cdots)$, then:
  $$\begin{array}{ll}
    (U(\pi))(\bar{0}) & = U(\rho_n) \circ U(\pi_n^{a_n+m_n.r_n}) \circ \cdots
    \circ U(\pi_1^{a_1 + m_1.r_1})(\bar{0}) \\ & =
    M_{\rho_n}(M_{\pi_n}(\ldots(\underbrace{M_{\pi_1}(\ldots(M_{\pi_1}%))
    }_{a_1 + m_1.r_1 \text{ times}}%((
    \!\!.\bar{0} +
    \bar{v_{\pi_1}})\ldots) + \bar{v_{\pi_1}})\ldots) +
    \bar{v_{\pi_n}}) + \bar{v_{\rho_n}}\\ & = \sum_{i=1}^n
    \left((\sum_{l=0}^{a_i+m_i.r_i-1} V(\pi_i,
      \pi_i^l\rho_i\pi_{i+1}^{a_{i+1}}\cdots \rho_n)) + V(\rho_i,
      \pi_{i+1}^{a_{i+1}}\cdots \rho_n)\right).
  \end{array}$$ Moreover, letting $\bar{v_i}(l) = V(\pi_i,
  \pi_i^l\rho_i\pi_{i+1}^{a_{i+1}}\cdots\rho_n)$, we have for all $i \in \{1,
  \ldots, n\}$:
  $$
  \begin{array}{ll}
    \sum_{l=0}^{a_i + m_i.r_i - 1} \bar{v_i}(l) & = \sum_{l=0}^{a_i-1}
    \bar{v_i}(l) + \sum_{l=a_i}^{a_i+m_i.r_i-1} \bar{v_i}(l)\\ & =
    \sum_{l=0}^{a_i-1} \bar{v_i}(l) +
    \sum_{m=0}^{m_i-1}\sum_{l=a_i+m.r_i}^{a_i+(m+1).r_i-1} \bar{v_i}(l)\\ & =
    \sum_{l=0}^{a_i-1} \bar{v_i}(l) +
    m_i\times\sum_{l=a_i}^{a_i+r_i-1}\bar{v_i}(l).
  \end{array}$$

  We now define a PA $D_{\bar{a}}$ which is a copy of $B_{\bar{a}}$
  except for the labels of its transition set $\delta_D$.  Each transition of
  $D_{\bar{a}}$ incorporates the relevant value of $V(\cdot, \cdot)$ so that the
  sum in the last equation above will be easily computable.  For each
  transition $(q, \pi, q_i') \in \delta_B$, there is a transition $(q, (\pi,
  \bar{d}), q_i') \in \delta_D$ such that if $q = q_i$ then
  $\bar{d}=\sum_{l=0}^{a_i-1}\bar{v_i}(l)$, and if $q = q_i'$ then $\bar{d} =
  \sum_{l=a_i}^{a_i+r_i-1}\bar{v_i}(l)$.  Finally, for each transition
  $(q_i', \rho_i, q_{i+1}) \in \delta_B$, there is a transition $(q_i',
  (\rho_i, \bar{d}), q_{i+1}) \in \delta_D$ with $\bar{d} = V(\rho_i,
  \pi_{i+1}^{a_{i+1}}\cdots \rho_n)$.

  Let $\omega \in L(D_{\bar{a}})$, there exists $\pi \in L(B_{\bar{a}})$ such
  that $\proj(\omega) = \pi$.  We show by induction on $n$, the number of
  $\pi_i$'s, that $\pext(\omega) = (U(\pi))(\bar{0})$.  If $n = 0$, then it
  is trivially true.  Suppose $n > 0$, and write $\pi =
  \pi_1^{a_1+m.r_1}\rho_1\pi'$ for some $m$ and $\pi'$.  Write $\omega =
  \omega_1\omega_2\omega_3\omega'$ with $\proj(\omega_1) = \pi_1^{a_1}$,
  $\proj(\omega_2)=\pi_1^{m_1.r_1}$, $\proj(\omega_3)=\rho_1$.  Then:
  $$
  \begin{array}{ll}
    (U(\pi))(\bar{0}) & = \underbrace{\textstyle\sum_{l=0}^{a_i-1}
    \bar{v_i}(l)}_{\pext(\omega_1)} +
    \underbrace{m_i\times\textstyle\sum_{l=a_i}^{a_i+r_i-1}\bar{v_i}(l)}_{\pext(\omega_2)}
    + \underbrace{V(\rho_1, \pi')}_{\pext(\omega_3)} +
    \underbrace{(U(\pi'))(\bar{0})}_{\pext(\omega') \text{ by ind. hyp.}}\\ &
    = \pext(\omega).
  \end{array}
  $$

  Thus, for a path $\pi$, $\pi \in L(B_{\bar{a}})$ iff there exists $\omega
  \in L(D_{\bar{a}})$ with $\proj(\omega) = \pi$, and in this case,
  $\pext(\omega) = (U(\pi))(\bar{0})$.

  We now lift the restrictions we set at the beginning of the proof.
  % First, if $\rho_0 \neq \eps$, then let $q_0 = \orig(\rho_0)$.  First,
  % $q_0 \neq q_i$, for $i > 0$, as otherwise $\rho_0\rho_1\cdots \rho_i$ and
  % $\rho_0\pi_1\rho_1\rho_2\cdots \rho_i$ would both be loops from $q_0$,
  % and as there does not exist $z$ such that those paths are in $z^*$,
  % $\apaths(A)$ would not be bounded.
  First, if $\rho_0 \neq \eps$, we add a fresh state $q_0'$ to $B_{\bar{a}}$
  together with a transition $(q_0', \rho_0, q_1)$.  In the construction of
  $D_{\bar{a}}$, the label $\rho_0$ is changed to $(\rho_0, V(\rho_0,
  \pi_1^{a_1}\rho_1\cdots \pi_n^{a_n}\rho_n))$.  Now, for $\omega \in
  L(D_{\bar{a}})$, let $\pi = \proj(\omega)$ and write $\pi = \rho_0\pi'$ for
  some $\pi'$; likewise, write $\omega = \omega_1\omega'$ where
  $\proj(\omega_1) = \rho_0$.  Then:
  $$(U(\pi))(\bar{0}) = V(\rho_0, \pi_1^{a_1}\rho_1\cdots \pi_n^{a_n}\rho_n)
  + (U(\pi'))(\bar{0}).$$ The previous results show that $(U(\pi'))(\bar{0})
  = \pext(\omega')$, and thus $(U(\pi))(\bar{0}) = \pext(\omega)$.

  Next, if $\rho_n = \eps$, then the transition $(q_n', \rho_n, q_{n+1})$ is
  removed from the construction of $B_{\bar{a}}$, and the final state changed
  to $q_n$.  The absence of this transition does not influence the
  computations in $D_{\bar{a}}$, thus we still have that for any $\omega \in
  L(D_{\bar{a}})$, $\pext(\omega) = (U(\proj(\omega)))(\bar{0})$.

  Now define $D'_{\bar{a}}$ to be a copy of $D_{\bar{a}}$ which differs only
  in the transition labels: a transition labeled $(t, \bar{d})$ in
  $D_{\bar{a}}$ becomes a transition labeled $(\mu_A(t), \bar{d})$ in
  $D'_{\bar{a}}$.  The properties of the SLRE for $\apaths(A)$ extracted from
  Lemma~\ref{lem:apathbounded}
  imply that $(D'_{\bar{a}}, C)$ is
  a flat DetPA.  Let $L_P \in \LDPA$ be the union of the languages of all
  flat DetPA $(D'_{\bar{a}}, C)$, for any $\bar{a}$ with $a_i \in \{0,
  \ldots, p_i + r_i - 1\}$.  Then $L_P=\{\mu(\pi) \st \pi \in P \land
  (U(\pi))(\bar{0}) \in C\}$.  Thus the union of all $L_P$ for all branches
  $P$ is $L(A, U, C)$.
The flat DetPA involved can be made into flat DetCA because the equivalence
in Theorem~\ref{thm:altdef} carries over the case of flat automata.
\end{proof}

We note that we can show that there exist nonsemilinear bounded languages in
$\LDAPA$, thus there exist bounded languages in $\LDAPA \setminus \LPA$.

\section{Discussion}

We showed that PA and DetPA recognize the same class of bounded languages,
namely BSL.  Moreover, we note that the union of flat DetCA is a concept that
has already been defined in~\cite{bouajjani-habermehl98} as
\emph{1-constrained-queue-content-decision-diagram (1-CQDD)}, and we may thus
conclude, thanks to the specific form of the DetCA obtained in
Theorem~\ref{thm:main}, that 1-CQDD capture exactly BSL.

A related model, \emph{reversal-bounded multicounter machines
  (RBCM)}~\cite{ibarra78}, has been shown to have the same expressive power
as PA~\cite{klaedtke-ruess03}.  We can show that \emph{one-way} deterministic
RBCM are strictly more powerful than DetPA, thus our result carries over to
the determinization of bounded RBCM languages.  This
generalizes~\cite[Theorem~3.5]{ibarra-su99} to bounded languages where the
words of the socle are of length other than one.  The fact that
\emph{two-way} deterministic RBCM are equivalent to \emph{two-way} RBCM over
bounded languages can be found
in~\cite[Theorem~3]{dang-ibarra-bultan-kemmerer-su00} without proof and
in~\cite{lisovik-koval05} with an extremely terse proof.

%Although this result, with
%\emph{two-way} instead of \emph{one-way}, appears
%in~\cite[Theorem~3]{dang-ibarra-bultan-kemmerer-su00}
%and~\cite{lisovik-koval05}, the former reference does not give a proof, while
%the latter lacks major definitions and precision.


\paragraph{Acknowledgments.}  We thank the anonymous referees for helpful
comments. The first author would further like to thank
L.~Beaudou, M.~Kaplan and A.~Lema\^itre.
% The authors are grateful to anonymous referees who pointed out that the
% following assertion is false: a DetAPA whose monoid is finite can be
% expressed as a DetPA.  Note however that in the nondeterministic case, we
% have indeed that an APA whose monoid is finite is expressible as a PA.

\bibliography{words11}

\begin{thebibliography}{10}
\providecommand{\bibitemdeclare}[2]{}
\providecommand{\urlprefix}{Available at }
\providecommand{\url}[1]{\texttt{#1}}
\providecommand{\href}[2]{\texttt{#2}}
\providecommand{\urlalt}[2]{\href{#1}{#2}}
\providecommand{\doi}[1]{doi:\urlalt{http://dx.doi.org/#1}{#1}}
\providecommand{\bibinfo}[2]{#2}

\bibitemdeclare{article}{alur-madhusudan09}
\bibitem{alur-madhusudan09}
\bibinfo{author}{Rajeev Alur} \& \bibinfo{author}{P.~Madhusudan}
  (\bibinfo{year}{2009}): \emph{\bibinfo{title}{Adding nesting structure to
  words}}.
\newblock {\sl \bibinfo{journal}{J. ACM}} \bibinfo{volume}{56}, pp.
  \bibinfo{pages}{16:1--16:43}, \doi{10.1145/1516512.1516518}.

\bibitemdeclare{inproceedings}{BFLS05-atva}
\bibitem{BFLS05-atva}
\bibinfo{author}{S{\'e}bastien Bardin}, \bibinfo{author}{Alain Finkel},
  \bibinfo{author}{J{\'e}r{\^o}me Leroux} \& \bibinfo{author}{{\relax Ph}ilippe
  Schnoebelen} (\bibinfo{year}{2005}): \emph{\bibinfo{title}{Flat acceleration
  in symbolic model checking}}.
\newblock In: {\sl \bibinfo{booktitle}{{P}roceedings of the 3rd {I}nternational
  {S}ymposium on {A}utomated {T}echnology for {V}erification and {A}nalysis
  ({ATVA}'05)}}, {\sl \bibinfo{series}{LNCS}} \bibinfo{volume}{3707},
  \bibinfo{publisher}{Springer}, \bibinfo{address}{Taipei, Taiwan}, pp.
  \bibinfo{pages}{474--488}, \doi{10.1007/11562948\_35}.

\bibitemdeclare{inproceedings}{bojanczyk09}
\bibitem{bojanczyk09}
\bibinfo{author}{Mikolaj Bojanczyk} (\bibinfo{year}{2009}):
  \emph{\bibinfo{title}{Weak {MSO} with the Unbounding Quantifier}}.
\newblock In: {\sl \bibinfo{booktitle}{STACS 2009}}, \bibinfo{volume}{3}, pp.
  \bibinfo{pages}{159--170}, \doi{10.4230/LIPIcs.STACS.2009.1834}.

\bibitemdeclare{article}{bouajjani-habermehl98}
\bibitem{bouajjani-habermehl98}
\bibinfo{author}{Ahmed Bouajjani} \& \bibinfo{author}{Peter Habermehl}
  (\bibinfo{year}{1998}): \emph{\bibinfo{title}{Symbolic Reachability Analysis
  of {FIFO}-Channel Systems with Nonregular Sets of Configurations}}.
\newblock {\sl \bibinfo{journal}{Theoretical Computer Science}}
  \bibinfo{volume}{221}, \doi{10.1016/S0304-3975(99)00033-X}.

\bibitemdeclare{inproceedings}{cadilhac-finkel-mckenzie11}
\bibitem{cadilhac-finkel-mckenzie11}
\bibinfo{author}{Michaël Cadilhac}, \bibinfo{author}{Alain Finkel} \&
  \bibinfo{author}{Pierre McKenzie} (\bibinfo{year}{2011}):
  \emph{\bibinfo{title}{On the expressiveness of {P}arikh automata and related
  models}}.
\newblock In: {\sl \bibinfo{booktitle}{Proceedings of 3rd International
  Workshop on Non-Classical Models of Automata and Applications \emph{(to
  appear)}}}.

\bibitemdeclare{article}{dalessandro-varricchio08}
\bibitem{dalessandro-varricchio08}
\bibinfo{author}{Flavio D'Alessandro} \& \bibinfo{author}{Stefano Varricchio}
  (\bibinfo{year}{2008}): \emph{\bibinfo{title}{On the growth of context-free
  languages}}.
\newblock {\sl \bibinfo{journal}{J. Autom. Lang. Comb.}} \bibinfo{volume}{13},
  pp. \bibinfo{pages}{95--104}.

\bibitemdeclare{inproceedings}{dang-ibarra-bultan-kemmerer-su00}
\bibitem{dang-ibarra-bultan-kemmerer-su00}
\bibinfo{author}{Zhe Dang}, \bibinfo{author}{Oscar~H. Ibarra},
  \bibinfo{author}{Tevfik Bultan}, \bibinfo{author}{Richard~A. Kemmerer} \&
  \bibinfo{author}{Jianwen Su} (\bibinfo{year}{2000}):
  \emph{\bibinfo{title}{Binary Reachability Analysis of Discrete Pushdown Timed
  Automata}}.
\newblock In: {\sl \bibinfo{booktitle}{{CAV}}}, {\sl \bibinfo{series}{LNCS}}
  \bibinfo{volume}{1855}, \bibinfo{publisher}{Springer}, pp.
  \bibinfo{pages}{69--84}, \doi{10.1007/10722167\_9}.

\bibitemdeclare{article}{DFGD-jancl10}
\bibitem{DFGD-jancl10}
\bibinfo{author}{St{\'e}phane Demri}, \bibinfo{author}{Alain Finkel},
  \bibinfo{author}{Valentin Goranko} \& \bibinfo{author}{Govert van Drimmelen}
  (\bibinfo{year}{2010}): \emph{\bibinfo{title}{Model-checking
  (\textsf{CTL}{*}) over Flat {P}resburger Counter Systems}}.
\newblock {\sl \bibinfo{journal}{Journal of Applied Non-Classical Logics}}
  \bibinfo{volume}{20}(\bibinfo{number}{4}), pp. \bibinfo{pages}{313--344},
  \doi{10.3166/jancl.20.313-344}.

\bibitemdeclare{book}{enderton72}
\bibitem{enderton72}
\bibinfo{author}{Herbert~B. Enderton} (\bibinfo{year}{1972}):
  \emph{\bibinfo{title}{A Mathematical Introduction to Logic}}.
\newblock \bibinfo{publisher}{Academic Press}.

\bibitemdeclare{article}{finkel-iyer-sutre03}
\bibitem{finkel-iyer-sutre03}
\bibinfo{author}{Alain Finkel}, \bibinfo{author}{S.~Purushothaman Iyer} \&
  \bibinfo{author}{Gr{\'e}goire Sutre} (\bibinfo{year}{2003}):
  \emph{\bibinfo{title}{Well-abstracted transition systems: application to
  {FIFO} automata}}.
\newblock {\sl \bibinfo{journal}{Information and Computation}}
  \bibinfo{volume}{181}(\bibinfo{number}{1}), p.
  \bibinfo{pages}{1{\textendash}31}, \doi{10.1016/S0890-5401(02)00027-5}.

\bibitemdeclare{article}{fischer65}
\bibitem{fischer65}
\bibinfo{author}{Patrick~C. Fischer} (\bibinfo{year}{1965}):
  \emph{\bibinfo{title}{Multi-tape and infinite-state automata---a survey}}.
\newblock {\sl \bibinfo{journal}{Commun. ACM}}
  \bibinfo{volume}{8}(\bibinfo{number}{12}), pp. \bibinfo{pages}{799--805},
  \doi{10.1145/365691.365962}.

\bibitemdeclare{book}{ginsburg66}
\bibitem{ginsburg66}
\bibinfo{author}{Seymour Ginsburg} (\bibinfo{year}{1966}):
  \emph{\bibinfo{title}{The Mathematical Theory of Context-Free Languages}}.
\newblock \bibinfo{publisher}{McGraw-Hill, Inc.}, \bibinfo{address}{New York,
  NY, USA}.

\bibitemdeclare{article}{ginsburg-spanier64}
\bibitem{ginsburg-spanier64}
\bibinfo{author}{Seymour Ginsburg} \& \bibinfo{author}{Edwin~H. Spanier}
  (\bibinfo{year}{1964}): \emph{\bibinfo{title}{Bounded {ALGOL}-Like
  Languages}}.
\newblock {\sl \bibinfo{journal}{Transactions of the American Mathematical
  Society}} \bibinfo{volume}{113}(\bibinfo{number}{2}), pp.
  \bibinfo{pages}{333--368}, \doi{10.2307/1994067}.

\bibitemdeclare{article}{ibarra78}
\bibitem{ibarra78}
\bibinfo{author}{Oscar~H. Ibarra} (\bibinfo{year}{1978}):
  \emph{\bibinfo{title}{Reversal-Bounded Multicounter Machines and Their
  Decision Problems}}.
\newblock {\sl \bibinfo{journal}{J. ACM}}
  \bibinfo{volume}{25}(\bibinfo{number}{1}), pp. \bibinfo{pages}{116--133},
  \doi{10.1145/322047.322058}.

\bibitemdeclare{article}{ibarra-su99}
\bibitem{ibarra-su99}
\bibinfo{author}{Oscar~H. Ibarra} \& \bibinfo{author}{Jianwen Su}
  (\bibinfo{year}{1999}): \emph{\bibinfo{title}{A technique for proving
  decidability of containment and equivalence of linear constraint queries}}.
\newblock {\sl \bibinfo{journal}{J. Comput. Syst. Sci.}}
  \bibinfo{volume}{59}(\bibinfo{number}{1}), pp. \bibinfo{pages}{1--28},
  \doi{10.1006/jcss.1999.1624}.

\bibitemdeclare{article}{ibarra-su-dang-bultan-kemmerer02}
\bibitem{ibarra-su-dang-bultan-kemmerer02}
\bibinfo{author}{Oscar~H. Ibarra}, \bibinfo{author}{Jianwen Su},
  \bibinfo{author}{Zhe Dang}, \bibinfo{author}{Tevfik Bultan} \&
  \bibinfo{author}{Richard~A. Kemmerer} (\bibinfo{year}{2002}):
  \emph{\bibinfo{title}{Counter Machines and Verification Problems}}.
\newblock {\sl \bibinfo{journal}{TCS}}
  \bibinfo{volume}{289}(\bibinfo{number}{1}), pp. \bibinfo{pages}{165--189},
  \doi{10.1016/S0304-3975(01)00268-7}.

\bibitemdeclare{article}{kaminski94}
\bibitem{kaminski94}
\bibinfo{author}{Michael Kaminski} \& \bibinfo{author}{Nissim Francez}
  (\bibinfo{year}{1994}): \emph{\bibinfo{title}{Finite-memory automata}}.
\newblock {\sl \bibinfo{journal}{TCS}}
  \bibinfo{volume}{134}(\bibinfo{number}{2}), pp. \bibinfo{pages}{329--363},
  \doi{10.1016/0304-3975(94)90242-9}.

\bibitemdeclare{inproceedings}{klaedtke-ruess03}
\bibitem{klaedtke-ruess03}
\bibinfo{author}{Felix Klaedtke} \& \bibinfo{author}{Harald Rue{\ss}}
  (\bibinfo{year}{2003}): \emph{\bibinfo{title}{Monadic Second-Order Logics
  with Cardinalities}}.
\newblock In: {\sl \bibinfo{booktitle}{ICALP}}, {\sl \bibinfo{series}{LNCS}}
  \bibinfo{volume}{2719}, \bibinfo{publisher}{Springer-Verlag}, pp.
  \bibinfo{pages}{681--696}, \doi{10.1007/3-540-45061-0\_54}.

\bibitemdeclare{techreport}{klarlund89}
\bibitem{klarlund89}
\bibinfo{author}{Nils Klarlund} \& \bibinfo{author}{Fred~B. Schneider}
  (\bibinfo{year}{1989}): \emph{\bibinfo{title}{Verifying Safety Properties
  Using Non-deterministic Infinite-state Automata}}.
\newblock \bibinfo{type}{Technical Report}, \bibinfo{address}{Ithaca, NY, USA}.

\bibitemdeclare{article}{lisovik-koval05}
\bibitem{lisovik-koval05}
\bibinfo{author}{L.~P Lisovik} \& \bibinfo{author}{D.~A Koval'}
  (\bibinfo{year}{2005}): \emph{\bibinfo{title}{Language recognition by two-way
  deterministic pushdown automata}}.
\newblock {\sl \bibinfo{journal}{Cybernetics and Systems Analysis}}
  \bibinfo{volume}{40}, pp. \bibinfo{pages}{939--942},
  \doi{10.1007/s10559-005-0034-7}.
\newblock \bibinfo{note}{{ACM} {ID:} 1057268}.

\bibitemdeclare{article}{parikh66}
\bibitem{parikh66}
\bibinfo{author}{Rohit~J. Parikh} (\bibinfo{year}{1966}):
  \emph{\bibinfo{title}{On context-free languages}}.
\newblock {\sl \bibinfo{journal}{Journal of the ACM}}
  \bibinfo{volume}{13}(\bibinfo{number}{4}), pp. \bibinfo{pages}{570--581},
  \doi{10.1145/321356.321364}.

\bibitemdeclare{book}{str94}
\bibitem{str94}
\bibinfo{author}{Howard Straubing} (\bibinfo{year}{1994}):
  \emph{\bibinfo{title}{Finite Automata, Formal Logic, and Circuit
  Complexity}}.
\newblock \bibinfo{publisher}{Birkh{{{{{{{{{{{\"a}}}}}}}}}}}user},
  \bibinfo{address}{Boston}, \doi{10.1007/978-1-4612-0289-9}.

\end{thebibliography}

% We recall the following classical result:
% \begin{proposition}[e.g., {\cite[Proposition
% 1.4.1]{deluca-varricchio99}}]\label{prop:comb}
%   Let $u, v \in \Sigma^*$.  There exist $p, q > 0$ such that $u^p = v^q$
%   iff there exists $z \in \Sigma^*$ such that $u, v \in z^*$.
% \end{proposition}

\end{document}